\def\mindex#1{\index{#1}}
\def\sq{\hbox{\rlap{$\sqcap$}$\sqcup$}}
\def\qed{\ifmmode\sq\else{\unskip\nobreak\hfil
\penalty50\hskip1em\null\nobreak\hfil\sq
\parfillskip=0pt\finalhyphendemerits=0\endgraf}\fi\medskip}
\long\def\defbox#1{\framebox[.9\hsize][c]{\parbox{.85\hsize}{%
\parindent=0pt
\baselineskip=12pt plus .1pt      
\parskip=6pt plus 1.5pt minus 1pt 
 #1}}}
\long\def\beginbox#1\endbox{\subsection*{}%
\hbox{\hspace{.05\hsize}\defbox{\medskip#1\bigskip}}%
\subsection*{}}
\def\endbox{}
\newsavebox{\junk}
\savebox{\junk}[1.6mm]{\hbox{$|\!|\!|$}}
\def\bC{{\mathbb C}}
\def\bE{{\mathbb E}}
\def\bR{{\mathbb R}}
\def\sfH{{\sf H}}
\def\bfmath#1{{\mathchoice{\mbox{\boldmath$#1$}}%
{\mbox{\boldmath$#1$}}%
{\mbox{\boldmath$\scriptstyle#1$}}%
{\mbox{\boldmath$\scriptscriptstyle#1$}}}}
\def\bfmY{\bfmath{Y}}
\def\bfmhhaY{\bfmath{\hhaY}} 
\def\bfmhhaY{\hbox to 0pt{$\widehat{\bfmY}$\hss}\widehat{\phantom{\raise 1.25pt\hbox{$\bfmY$}}}}
\def\til={{\widetilde =}}
\def\clC{{\cal C}}
\def\clN{{\cal N}}
 \def\FRAC#1#2#3{\genfrac{}{}{}{#1}{#2}{#3}}
\def\ddtp{{\mathchoice{\FRAC{1}{d^{\hbox to 2pt{\rm\tiny +\hss}}}{dt}}%
{\FRAC{1}{d^{\hbox to 2pt{\rm\tiny +\hss}}}{dt}}%
{\FRAC{3}{d^{\hbox to 2pt{\rm\tiny +\hss}}}{dt}}%
{\FRAC{3}{d^{\hbox to 2pt{\rm\tiny +\hss}}}{dt}}}}
\def\eqdef{\mathbin{:=}}
\def\average#1,#2,{{1\over #2} \sum_{#1}^{#2}}
\def\eye(#1){{\bf(#1)}\quad}
\newtheorem{theorem}{{\bf Theorem}}
\newtheorem{proposition}{{\bf Proposition}}
\newtheorem{remark}{{\bf Remark}}
\newtheorem{lemma}{{\bf Lemma}}
\def\eq#1/{(\ref{e:#1})}
\newcommand{\beqn}[1]{\notes{#1}%
\begin{eqnarray} \elabel{#1}}
\newcommand{\eeqn}{\end{eqnarray} }
\newcommand{\beq}[1]{\notes{#1}%
\begin{equation}\elabel{#1}}
\newcommand{\eeq}{\end{equation}}
\def\bdes{\begin{description}}
\def\edes{\end{description}}
\newcounter{rmnum}
\newcounter{anum}
\def\ass(#1:#2){(#1\ref{#1:#2})}
\def\ritem#1{
\item[{\sf \ass(\current_model:#1)}]
}
\newenvironment{recall-ass}[1]{%
\begin{description}
\def\current_model{#1}}{
\end{description}
}
\newcommand{\Sigmay}{{\Sigmay}_{\yv}}
\def\wh{\widehat}
\def\herm{{\sfH}}
\newcommand{\snrul}{{\sf snr}_{\rm ul}}
\newcommand{\snrdl}{{\sf snr}_{\rm dl}}
\newcommand{\hvt}{\hv}
\newcommand{\yvt}{\yv}
\def\cg{{\clC\clN}} 
\newcommand{\normd}[1]{{\left\vert\kern-0.25ex\left\vert\kern-0.25ex\left\vert #1 
		\right\vert\kern-0.25ex\right\vert\kern-0.25ex\right\vert}}
\long\def\comment#1{}
\newcommand{\av}{{\bf a}}
\newcommand{\bv}{{\bf b}}
\newcommand{\dv}{{\bf d}}
\newcommand{\fv}{{\bf f}}
\newcommand{\hv}{{\bf h}}
\newcommand{\uv}{{\bf u}}
\newcommand{\wv}{{\bf w}}
\newcommand{\vv}{{\bf v}}
\newcommand{\xv}{{\bf x}}
\newcommand{\yv}{{\bf y}}
\newcommand{\zv}{{\bf z}}
\newcommand{\Am}{{\bf A}}
\newcommand{\Bm}{{\bf B}}
\newcommand{\Fm}{{\bf F}}
\newcommand{\Gm}{{\bf G}}
\newcommand{\Hm}{{\bf H}}
\newcommand{\Rm}{{\bf R}}
\newcommand{\Sm}{{\bf S}}
\newcommand{\Um}{{\bf U}}
\newcommand{\Wm}{{\bf W}}
\newcommand{\Vm}{{\bf V}}
\newcommand{\Xm}{{\bf X}}
\newcommand{\Ac}{{\cal A}}
\newcommand{\Nc}{{\cal N}}
\newcommand{\Qc}{{\cal Q}}
\newcommand{\Rc}{{\cal R}}
\newcommand{\Sc}{{\cal S}}
\newcommand{\lambdav}{\hbox{\boldmath$\lambda$}}
\newcommand{\muv}{\hbox{\boldmath$\mu$}}
\newcommand{\phiv}{\hbox{\boldmath$\phi$}}
\newcommand{\psiv}{\hbox{\boldmath$\psi$}}
\newcommand{\thetav}{\hbox{\boldmath$\theta$}}
\newcommand{\Lambdam}{\hbox{\boldmath$\Lambda$}}
\newcommand{\Sigmam}{\hbox{\boldmath$\Sigma$}}
\newcommand{\Phim}{\hbox{\boldmath$\Phi$}}
\newcommand{\Psim}{\hbox{\boldmath$\Psi$}}
\newcommand{\trace}{{\rm Tr}}
\newcommand{\SNR}{{\sf snr}}
\newcommand{\transp}{{\sf T}}
\renewcommand{\vec}{{\rm vec}}
\newcommand{\hb}{\mathbb{h}}
\newcommand{\yb}{\mathbb{y}}
\title{FDD Massive MIMO Channel Training: \\ Optimal Rate-Distortion Bounds and the Efficiency of ``one-shot'' Schemes}
\author{Mahdi Barzegar Khalilsarai, Yi Song, Tianyu Yang, and Giuseppe Caire
\thanks{The authors are with the Communications and Information Theory Group (CommIT), Technische Universit\"{a}t Berlin, 10587 Berlin, Germany 
(e-mail: \{m.barzegarkhalilsarai, yi.song, tianyu.yang, caire\}@tu-berlin.de).}
\thanks{Part of this work has been presented in IEEE Int. Symp. on Inform. Theory (ISIT) 2022 \cite{khalilsarai2022channel_ISIT}.}
}
\begin{document}
	
	\maketitle
	
	\vspace{-1cm}
	
\begin{abstract}
We study the problem of providing channel state information (CSI) at the transmitter
in multi-user ``massive'' MIMO systems operating in frequency division duplexing (FDD). 
The wideband MIMO channel is a vector-valued random process correlated in time, space (antennas), and frequency (subcarriers).  
The base station (BS) broadcasts periodically $\beta_{\rm tr}$ 
pilot symbols from its $M$ antenna ports to $K$ single-antenna users (UEs). 
Correspondingly, the $K$ UEs send feedback messages about their channel state using $\beta_{\rm fb}$
symbols in the uplink (UL). 
Using results from remote rate-distortion theory, we show that, as $\SNR \to\infty$, 
the optimal feedback strategy achieves a channel state estimation mean squared error (MSE) that behaves as 
$\Theta (1)$ if $\beta_{\rm tr} < r$  and as $\Theta\left(\SNR^{-\alpha}\right)$ when $\beta_{\rm tr} \ge r$, where $\alpha = \min (\beta_{\rm fb}/r , 1)$, where $r$ is the rank of the channel covariance matrix. 
The MSE-optimal rate-distortion strategy implies encoding of long sequences of channel states, which would yield completely stale  CSI and therefore poor multiuser precoding performance. 
Hence,  we consider three practical ``one-shot'' CSI strategies  with minimum one-slot delay 
and analyze their large-SNR channel estimation MSE behavior. 
These are: (1) digital feedback via entropy-coded scalar quantization (ECSQ), 
(2)  analog feedback (AF), and (3) local channel estimation at the UEs and digital 
feedback. These schemes have different requirements in terms of knowledge of the channel statistics at the UE and at the BS. In particular, the latter strategy requires no statistical knowledge and is closely inspired by a CSI feedback scheme 
currently proposed in 3GPP standardization.
It is shown that ECSQ achieves optimal MSE at the price of a slight increase in feedback rate which vanishes for large SNR. 
AF achieves the optimal MSE decay rate of $\Theta (\SNR^{-1})$ whenever $\beta_{\rm tr},\beta_{\rm fb} \ge r$ but is sub-optimal if $\beta \ge r$ and $\beta_{\rm fb} < r$. 
The {\em 3GPP-inspired} scheme is shown, via numerical simulations, to achieves performance similar to
ECSQ and AF when the multipath channel is sufficiently sparse in the angle-delay domain, 
but suffers from a large performance gap if this requirement is not met.
\end{abstract}

	\begin{keywords}
		Wideband FDD massive MIMO, rate-distortion theory, channel state information feedback strategies.
	\end{keywords}
	
	\section{Introduction}  \label{sec:intro}
	
	Multiuser (massive) MIMO consists of serving $K > 1$ users (UEs) on the same time-frequency resource dimension 
	using a larger number of antennas ($M\gg 1$) at the Base Station (BS) via spatial multiplexing \cite{marzetta2010noncooperative,marzetta2016fundamentals}. 
	Achieving the remarkable benefits of massive MIMO relies at large on the availability of accurate channel state information (CSI) at the BS. 
	In particular, the downlink (DL) requires that the BS computes precoding vectors as a function of the users' DL CSI. In time division duplexing (TDD) 
	systems, DL CSI is obtained from uplink (UL) pilots through channel reciprocity. In frequency division duplexing (FDD), where reciprocity does not hold, the BS needs to train user channels by broadcasting training pilots in DL and receiving CSI 
	 feedback in UL. 
	 Training with finite-rate feedback results in imperfect CSI estimates, which translates into a loss in DL spectral efficiency, the proportions of which depends on the CSI estimate quality. It is for instance well-known that when the mean squared error (MSE) between true and estimated CSI of the users decreases as $O(\SNR^{-1})$, then zero-forcing (ZF) beamforming achieves the full system Degrees of Freedom (DoF) despite imperfect CSI \cite{jindal2006mimo}. 
	 It is also known that if the error decreases as $O(\SNR^{-\alpha})$ for some $\alpha \in [0,1]$ for all $K$ users, then the optimal achievable DoF 
	 is given by $1+(K-1)\alpha$, and these are achievable via the rate-splitting method \cite{joudeh2016sum}.\footnote{The system DoF is defined as the limit of the total DL spectral efficiency (in bit per time-frequency symbol) divided by $\log \SNR$, as $\SNR \rightarrow \infty$. This is also known as the ``pre-log'' factor of the sum spectral efficiency, or also as the system total ``multiplexing gain'' \cite{joudeh2016sum}.} 
	 
	 The parameter $\alpha$ is known in the literature as the \textit{quality scaling exponent} (QSE) \cite{jindal2006mimo,yang2012degrees}. In this paper, we present an upper-bound on the QSE achieved by \textit{any} DL training and UL feedback strategy in FDD systems 
	under the following assumptions. We consider a BS with $M$ antennas and OFDM data transmission format over $N$ subcarriers. We 
	model the channel as a correlated Gaussian vector process that evolves in time according to a block-fading model, in which the channel is constant, and independently realized, over coherence time intervals of duration $T$ OFDM symbols (in time).\footnote{Each OFDM symbol corresponds to $N$ signal samples in the time domain plus the cyclic prefix, which in the context of this paper is ignored since it is irrelevant. This means that an OFDM symbol spans $N$ time-frequency signal dimensions, 
	in the OFDM time-frequency frame.} 
	The channel statistics, namely mean and covariance matrix of the channel vectors, 
	are assumed constant over time intervals much longer than the coherence time. 
	The BS trains the channels of all users by broadcasting $\beta_{\rm tr}$ 
pilot symbols over each coherence block of $\beta = TN$ time-frequency symbols.
Upon receiving noisy pilots, the users compute their respective feedback messages and send them to the BS in the UL using 
$\beta_{\rm fb}\le \beta$ symbols. To simplify matters, we assume that DL pilot transmission and UL feedback take place in the same channel coherence block, so that there is no ``channel aging'' \cite{truong2013effects}.\footnote{The effect of channel aging can be taken into account by an appropriate channel
predictor. However, this goes beyond the scope of this paper. Here we prefer to focus uniquely on the performance of DL training and UL feedback.}  

Under these assumptions, we show that the QSE of any feedback strategy is upper-bounded by a value that depends on three parameters: the channel covariance rank $r$, the training dimension (or pilot length) $\beta_{\rm tr}$, and the feedback dimension $\beta_{\rm fb}$. The bound is derived by modelling the user as an encoder that 
observes outputs of a source (the channel vector) through noisy linear measurements (DL pilot symbols). 
This is a particular instance of the general  {\em remote source coding} problem in rate-distortion theory \cite{berger1971rate}. The encoded source is then transmitted over a channel 
of given capacity, which yields a channel estimate at the BS. We show that, for given $r,\, \beta_{\rm tr},\, \beta_{\rm fb}$, as SNR$\to\infty$, the MSE of the estimated CSI at the BS behaves at best as $\Theta(\SNR^{-\alpha_{\rm rd}})$ where $\alpha_{\rm rd} = 0$ whenever $\beta_{\rm tr}<r$ (insufficient training) and $\alpha_{\rm rd} = \min (\beta_{\rm fb}/r,1)$ when $\beta_{\rm tr} \geq r$. More specifically, if $\beta_{\rm tr}\ge r$ and $\beta_{\rm fb}<r$ (sufficient training, insufficient feedback), the error decays as a function of SNR with a less-than-one fractional exponent $\beta_{\rm fb}/r$ hence achieving only a fraction of the full 
DoF, and if $\beta_{\rm tr},\beta_{\rm fb}\ge r$ (sufficient training and feedback), the error decays with an exponent of $1$ and achieves the full DoF.

	The feedback strategy that achieves the optimal rate-distortion trade-off involves knowledge of the channel statistics both at the BS and the UE  side, 
	as well as employment of high-dimensional vector quantizers operating on infinite-dimensional sequences of channel observations. 
	While achieving the best CSI MSE, this strategy is totally impractical since it incurs a large feedback delay, thus providing completely stale CSI, which is therefore useless for computing the multi-user MIMO precoding vectors on each coherence block. 
Hence, we consider the following three alternatives for practical  one-shot schemes where the feedback message is a function of the instantaneous channel estimate in the current slot only. 
\begin{enumerate}
	\item \textbf{ECSQ Feedback:} We first consider a scheme where the UE has statistical knowledge of its DL channel, it quantizes the Karhunen-Lo{\`e}ve (KL) expansion 
	coefficients via dithered scalar quantization, applies entropy coding for feedback compression  \cite{ziv1985universal}, 
	and sends the {\em digital} feedback message  consisting of the entropy-coded bits.
This scheme is referred to as entropy-coded scalar quantization (ECSQ). We show that ECSQ achieves optimal distortion for all SNR values with an overhead 
in terms of the feedback symbols with respect to the rate-distortion strategy which vanishes as SNR$\to \infty$ (hence yielding a QSE of $\alpha_{\rm ecsq} = \alpha_{\rm rd}$). 

\item \textbf{Analog Feedback:} The next strategy is known as Analog Feedback (AF) in the literature \cite{marzetta2006fast,caire2010multiuser,kobayashi2011training}, 
which lifts the requirement of channel statistics knowledge at the UE and simply sends the DL pilot slot of length $\beta_{\rm tr}$ in UL by ``spreading" it 
over $\beta_{\rm fb}$ dimensions using unquantized  quadrature amplitude modulation (QAM).\footnote{Interestingly, a similar
direct transmission of unquantized source symbols with spreading is proposed and used by Amimon \url{https://www.amimon.com/} for joint source channel coding of video, thus demonstrating the practical feasibility of such ``analog feedback'' technique.} 
The BS then computes a minimum mean squared error (MMSE) estimate of the channel given the feedback. 
We show that with sufficient training and feedback ($\beta_{\rm tr},\beta_{\rm fb}\ge r$), AF achieves the optimal QSE  ($\alpha_{\rm af} = \alpha_{\rm rd}$), while it is strictly sub-optimal when feedback is insufficient $(\beta_{\rm fb}<r)$ and yields a constant residual error even as SNR$\to\infty$ ($\alpha_{\rm af}=0$ vs $\alpha_{\rm rd} = \beta_{\rm fb}/r$). \item \textbf{CS-Based Feedback:} Finally, motivated by the scheme currently proposed in 3GPP, 
we consider a scheme that requires no channel statistical knowledge (either at the BS or at the users). 
In this {\em 3GPP-inspired} scheme, the UE  estimates the channel coefficients in the angle-delay domain 
and feeds back the quantization bits relative to the dominant coefficients. 
Since the DL training dimension $\beta_{\rm tr}$ is generally smaller than the ambient dimension of the channel $NM$ 
(subcarriers $ \times$ antennas), the channel estimation at the UE must use some form of {\em compressed sensing} (CS), 
which in turns relies on channel sparsity in the angle-delay domain. Notice that whenever $\beta_{\rm tr} < NM$ the channel estimation problem is intrinsically a CS problem, 
and this corresponds to the widely studied case of ``compressed'' DL training
(e.g., see \cite{shen2016compressed,han2017compressed,zhang2018distributed,ding2015channel}).  
We show that the {\em 3GPP-inspired} scheme incurs a constant residual error as SNR$\to \infty$ and hence a QSE of zero in all non-trivial scenarios ($\alpha_{\rm cs}= 0$), although it can be quite effective in moderate SNR values and under favorable channel sparsity conditions.
\end{enumerate}
	
\subsection{Related Works}
	
The analysis of the CSI estimation error and its effect on DL spectral efficiency in FDD MIMO systems has been the subject of plenty of works in the literature (e.g., see \cite{jindal2006mimo,ding2007multiple,caire2010multiuser,shirani2009channel}). 
In \cite{caire2010multiuser,shirani2009channel} upper-bounds on the difference between the achievable rates with ZF beamforming with perfect and estimated CSI are provided, with comparisons between AF and digital (codebook based) feedback schemes. 
	These works consider a ``narrowband'' MIMO channel ($N=1$ subcarrier, or equivalently, a frequency non-selective channel) and assume
$\beta_{\rm tr}\ge M$ DL pilots, which is reasonable for small and moderate number of antennas $M$, but unrealistic when considering massive MIMO. 
We depart from this assumption by considering the training dimension to be an arbitrary value, possibly less than the number of BS antennas. We also show that, different from the conclusion of \cite{jindal2006mimo} and in line with the conclusions of 
\cite{caire2010multiuser,kobayashi2011training}, taking into account the estimation error at the user, analog feedback achieves the same error decay rate of $\Theta (\SNR^{-1})$ as digital feedback (operating at the rate-distortion bound), with no knowledge of channel statistics at the user side and with very minimal signal processing, provided that the number of training and feedback symbols ($\beta_{\rm tr}$ and $\beta_{\rm fb}$) are no less than the channel covariance rank $r$.

On the algorithmic side, several works have considered designing DL training pilots and feedback codebooks for massive MIMO channels that achieve certain performance criteria \cite{jiang2015achievable,bazzi2018amount,gu2019information}. However, these works do not suggest a definite answer to the question of how much training and feedback is needed to achieve a target error and/or spectral efficiency performance. One reason is that the design of pilot matrices and feedback codebooks in these works depends on the particular statistics of the user channels. In contrast, we consider a rather generic design of training pilots as independently generated Gaussian symbols, and we characterize the large-SNR behavior of the channel estimation error in terms of training and feedback dimensions. This characterization depends on the channel statistics only in terms of the covariance rank.   
	
As said before, many works have focused on CSI estimation via CS methods \cite{shen2016compressed,han2017compressed,zhang2018distributed,ding2015channel}. 
The underlying assumption in all these works is that the channel vector can be approximated as $\hv \approx \Phim \wv$, for some known dictionary matrix  $\Phim$ and $s$-sparse, unknown $\wv$ where $s$ is much smaller than the dimension of $\hv$. A plethora of CS algorithms have been devised to estimate $\hv$ from its $m$ noisy linear projections, and it is well-known that stable reconstruction\footnote{In the literature of compressed sensing, stable reconstruction means that 
the MSE in estimating $\hv$ from noisy linear projections in the form 
$\yv = \Xm \hv + \zv$ is bounded by a constant (independent of the dimension of $\hv$) times the variance of the additive noise $\zv$. This also implies that the MSE vanishes as the SNR in the observation grows to infinity.} 
is possible if $m$ is larger than $s$ times a logarithmic term in the dimension of $\hv$ 
(e.g., see \cite{reeves2008sampling}). This allows to transmit a number of DL pilot symbols that depends on the channel sparsity order rather than
on the channel dimension, providing large potential savings in the DL pilot overhead.
In this work, we show that while CS-based training and feedback can be very beneficial in a range of moderate SNR, 
it involves a non-vanishing CSI error and therefore a null QSE. 
We emphasize this point by considering a well-known CS-based channel estimation technique 
in the {\em 3GPP-inspired} scheme considered in this paper. 

	\section{Downlink Channel Training}\label{sec:ch_training}
	
In the considered system, a  BS with $M$ antennas serves $K\le M$ single-antenna user equipments (UEs), 
operating in FDD mode with OFDM data transmission over $N$ subcarriers. 
In this section we omit the user index since the DL training is common (broadcasted to all users) and each user operates its feedback scheme independently. 
The frequency-domain symbol over subcarrier $n$, received by a generic UE, is given by 
	\[y[n] =\hvt^\herm [n]  \xv [n] + z[n],\]
	where $\hvt [n]  \in \bC^M$ consists of the channel fading coefficients between the $M$ BS antennas and that of the UE at subcarrier $n$, $\xv[n]  \in \bC^M$ is the vector of transmitted symbols from the $M$ array antennas satisfying the transmission power constraint $\bE [\Vert \xv [n]\Vert^2]\le \snrdl $ for all $n$, where $\snrdl$ denotes signal-to-noise ratio (SNR) in the DL, while $z\sim \cg (0,1)$ is zero-mean, additive white Gaussian noise (AWGN).\footnote{We have simplified the notation by normalizing the additive noise variance and accordingly defining the BS transmission power as equivalent to the SNR. We note that this gives the so-called ``pre-beamforming" SNR, which is different from defining the SNR as the ratio between the expected signal power and the expected noise power at the receiver side.} Concatenating the channel over all subcarriers, we define the wideband channel as the vector $\hv \eqdef [ \hvt[1]^\transp,\ldots, \hvt[N]^\transp ]^\transp \in \bC^{MN}$, which we refer to as the channel state information at the transmitter (CSIT). We assume that $\hv$ evolves according to a block-fading model where it is constant over frames of duration $T$ OFDM symbols
	(i.e., blocks of $\beta = T N$ time-frequency symbols) 
	and changes frame to frame according to an i.i.d. process, namely $\hv \sim \cg (\muv,\Sigmam^h)$ where $\muv = \bE[\hv]$ and $\Sigmam^{h} = \bE \left[\left(\hv - \muv \right)  \left( \hv-\muv \right)^{\herm} \right]$ is the channel covariance of rank $r = {\rm rank}(\Sigmam^h)$. Notice that the channel correlation in space (antenna) and frequency (subcarriers) is completely characterized by $\Sigmam^h$.
	
	To train the DL channel, the BS broadcasts pilot symbols in each frame of dimension $\beta$. The pilot symbols are placed in the time-frequency grid 
	over $T_p \le T$ OFDM symbols in time and  over a subset of $N_p$ subcarriers in frequency, for a total of $\beta_{\rm tr} = T_p N_p$ pilot symbols per frame. 
	We denote the pilot subcarriers by the set of indices $\Nc_p \subseteq \{1,\ldots,N\}$. The training measurements received at the UE can be written as
	\begin{equation}\label{eq:subcar_measurements}
		\yvt^{\rm tr}[n] = \hvt^\herm [n] \Xm^{\rm tr}  [n]+ \zv^{\rm tr} [n],~n\in \Nc_p,
	\end{equation}
	where $\Xm^{\rm tr}[n] = \left[\xv^{\rm tr}_1[n],\ldots,\xv^{\rm tr}_{T_p}[n]\right]\in \bC^{M\times T_p}$ is a matrix, where the $t$-th column $\xv^{\rm tr}_t[n]$ denotes the vector of $M$ training symbols sent at subcarrier $n$ at time instant $t$, 
	and where row $m$ represents the length-$T_p$ pilot sequence transmitted from antenna $m$ at subcarrier $n$. 
	Collecting the $\beta_{\rm tr}$ training measurements over all pilot subcarriers, the received pilot signal at the UE side $\yv^{\rm tr} = [\yvt [n_1], \ldots,\yvt [n_{N_p}]]\in \bC^{1\times \beta_{\rm tr}}$ can be written as
	\begin{equation}\label{eq:training_signal}
		\yv^{\rm tr} =\hv^\herm  \Xm^{\rm tr} + \zv^{\rm tr},
	\end{equation}
	where $\Xm^{\rm tr} = \left( \Bm_{n,\ell} \right)_{n\in [N]}^{\ell\in [N_p]}\in \bC^{MN\times \beta_{\rm tr}}$ is the \textit{training matrix} consisting of $M\times T_p$ blocks $\Bm_{n,\ell}$, where $\Bm_{n,\ell} = \Xm^{\rm tr}[n_{\ell}]$ if $n_\ell \in \Nc_p$ and $\Bm_{n,\ell} = \mathbf{0}$ if $n_\ell \notin \Nc_p$. 
For the sake of analytical tractability, in this paper 
we assume the $M$-dimensional pilot vectors to be generated according to 
	\begin{equation}\label{eq:isotropic_pilots}
		\xv_t^{\rm tr}[n] \sim \cg \left (\mathbf{0},\frac{\snrdl}{M}\mathbf{I} \right ),~t\in \{1,\ldots, T_p\}, \, n\in \Nc_p.
	\end{equation}
It can be shown that this choice suffices to achieve optimal QSE. 	
Upon receiving the pilots, the UE computes a feedback message  containing information about the channel state 
and sends it to the BS over $\beta_{\rm fb}$ uses (i.e., time-frequency symbols) of the UL channel. 


We model the UL as a MIMO multiple access channel (MIMO-MAC) where all the UEs send their feedback simultaneously to the BS. We assume for simplicity that the BS has perfect knowledge of the UL channels. Such a knowledge can be obtained via UL pilots embedded in the UL frame. We denote the peak transmission power for each user by $\snrul$, which we assume to be a constant factor of the BS transmission power, i.e. $\snrul = \kappa \snrdl$, where typically $\kappa<1$ since the users can transmit with less power than the BS. Similar to \cite{jiang2015achievable}, we consider the MIMO-MAC uplink channel capacity formula $C^{\rm ul} = \log (M \snrul) = \log (M \kappa \snrdl) $ (per user) with a diversity-multiplexing trade-off factor of one \cite{caire2010multiuser}.  Notice that in order to compare the efficiency of AF with other digital feedback techniques it is of fundamental importance to 
convert ``bits'' into UL ``signal dimensions'', which in turns requires some notion of spectral efficiency of the UL channel \cite{kobayashi2011training}.
The simple UL capacity formula used here allows for a more elegant development 
of the theory without blurring the main results. However, this assumption is not fundamental, in the sense that one can obtain similar results by replacing other feedback channel models, as long as the high-SNR capacity of the feedback channel grows with $\log (\snrdl)$.

Finally, we consider the mean squared error (MSE),
	\begin{equation}\label{eq:err_def}
		d(\hv,\widehat{\hv}) = \bE \left[ \Vert \hv - \widehat{\hv} \Vert^2 \right],
	\end{equation}
as the distortion metric between true $\hv$ and estimated CSIT $\widehat{\hv}$.
	From the estimated CSIT, the BS computes an $M\times K$ precoding matrix $\Vm [n]$ for each subcarrier $n$. Here we consider the standard and commonly used ``naive'' zero-forcing (ZF) precoding,\footnote{This refers to the fact that the precoder naively treats the estimated channels as true channels.} where 
	$\Vm [n]$ is the column-normalized pseudo-inverse of the estimated channel matrix $\widehat{\Hm} [n] =[ \widehat{\hv}_1[n],\ldots, \widehat{\hv}_K[n] ]$, where $\widehat{\hv}_k[n]$ denotes the estimated channel of user $k$ at subcarrier $n$. The channel estimation QSE is  defined as \cite{joudeh2016sum,jindal2006mimo}
	\begin{equation}\label{eq:QSE_def}
		\alpha = \lim_{\snrdl\to \infty }-\frac{\log\, d(\hv,\widehat{\hv})}{\log \, \snrdl},
	\end{equation}
	indicating the rate of error decay in large SNR. We characterize the large-SNR error behavior slightly differently, using the $\Theta$ notation as $d(\hv , \widehat{\hv})= \Theta (\snrdl^{-\alpha})$ which implies a QSE of $\alpha$. \footnote{We use the standard Bachmann-Landau  {\em order notation} 
	for which $f(x) = O(g(x))$ if there exists constants $C_0$ and $x_0$ for which $|f(x)| \leq C_0 g(x)$ for all $x \geq x_0$, 
	$f(x) = \Omega(g(x))$ if $g(x) = O(f(x))$ and $f(x) = \Theta(g(x))$ if both $f(x) = O(g(x))$ and $f(x) = \Omega(g(x))$.}

	
	\section{Lower bound on CSIT estimation error}\label{sec:RD_LB}
	
	We can think of the feedback message as a code that contains information about the channel, 
	computed at the user side. The instantaneous channel is a random realization of a \textit{source} with distribution $\cg (\muv , \Sigmam^h)$. For each source realization, 
the user (encoder) observes $\beta_{\rm tr}$ noisy linear measurements given by (\ref{eq:training_signal}). 
Since the encoder does not have direct access to the source, the problem is an example of \textit{remote source coding} \cite{berger1971rate}. 
In this context, the source is a stationary sequence of i.i.d. vector symbols 
$\hb^{S} = \{ \hv^{(i)} \sim \cg (\muv,\Sigmam^h) \}_{i=1}^{S} $, where $\hv^{(i)}$ denotes the channel in frame $i$ and $S$ is the source-encoding block length.\footnote{The superscript $i$ is included here to distinguish between channel realizations across different frames, but is omitted elsewhere when such distinction is not necessary.} 
The per-frame pilot transmission in DL yields a sequence of measurements $\yb^{S}=\{ \yv^{{\rm tr}\, (i)} \}_{i=1}^{S}$ such that, from \eqref{eq:training_signal}, 
we have $\yv^{{\rm tr}\, (i)} = \hv^{(i)\, \herm}\Xm^{\rm tr} + \zv^{{\rm tr}\, (i)}$. 
A rate $R$
	remote rate-distortion code of block length $S$ consists of an encoding function 
	$ f_{S}\,: \yb^{S}\to \{1,\ldots,2^{S R}\},$ and a decoding function $g_{S}\,: \{1,\ldots,2^{S R}\}\to \widehat{\hb}^S $, 
where $\widehat{\hb}^{S} = \{ \widehat{\hv}^{(i)}\}_{i=1}^{S}$ is the sequence of channel estimates. 
A remote rate-distortion pair $(R,D)$ is said to be achievable if there exists a sequence of rate $R$ codes such that
	\begin{equation}\label{eq:d_avg}
		d_{\rm avg} \eqdef \underset{S \to \infty}{\lim}  \frac{1}{S} \sum_{i=1}^{S} d(\hv^{(i)},\widehat{\hv}^{(i)})\le D
	\end{equation}
	where $d(\cdot,\cdot)$ is defined in \eqref{eq:err_def}.
	 The remote rate-distortion function $R_{\hv}^r(D)$ is the infimum of rates $R$ such that $(R,D)$ is achievable.
	 The inverse function $D_{\hv}^r (R)$, i.e., the infimum of $D$ such that $(R,D)$ is achievable, is referred to as the remote distortion-rate function \cite{cover2006elements}.
	
	Let us denote the linear MMSE estimator of the channel given the pilots by 
	\begin{equation}
		\begin{aligned}
			\uv^{(i)} &= \bE[\hv^{(i)}|\yv^{{\rm tr}\, (i)} ] = \Sigmam^h \Xm^{\rm tr}\left(  \Xm^{{\rm tr}\, \herm} \Sigmam^h \Xm^{{\rm tr}} + \mathbf{I} \right)^{-1} (\yv^{{\rm tr}\, (i) } - \muv^\herm \Xm^{\rm tr} )^\herm + \muv,   \label{MMSE-u}
		\end{aligned}
	\end{equation}
	for $i=1,\ldots,S$. For a given realization of $\Xm^{\rm tr}$, the $\uv^{(i)}$ are independent realizations of a Gaussian vector-valued i.i.d. process with mean $\muv$ and covariance 
	\begin{equation}\label{eq:Sigma_u}
		\Sigmam^u = \Sigmam^h \Xm^{\rm tr}\left(  \Xm^{{\rm tr}\, \herm} \Sigmam^h \Xm^{{\rm tr}} + \mathbf{I} \right)^{-1}\Xm^{{\rm tr}\, \herm} \Sigmam^h. 
	\end{equation}
	We denote the non-zero eigenvalues of $\Sigmam^{u}$ by the ordered set $\{\lambda_{1}^{u} \ge \ldots \ge  \lambda_{\widetilde{r}}^{u} \}$, where $\widetilde{r} \le \min (r , \beta_{\rm tr})$ is the rank of $\Sigmam^u$. The following lemma gives an expression for the remote rate-distortion function $R_{\hv}^r(D)$.
	\begin{lemma}[remote rate-distortion function]\label{lem:rate_distortion}
		For a fixed realization of the training matrix $\Xm^{\rm tr}$, the remote rate-distortion function is given by~\footnote{$[\cdot]_+$ indicate the positive part, i.e., $[x]_+ = \max\{x,0\}$.}
		\begin{equation}\label{eq:rrd_expression}
			R_{\hv}^r (D) = \sum_{\ell=1}^{\widetilde{r}}\left[ \log \frac{\lambda_{\ell}^{u}}{\gamma} \right]_+,~\text{for}~ D\ge D_{\rm mmse}
		\end{equation}
		where $\gamma$ is chosen such that $\sum_{\ell=1}^{\widetilde{r}}\min \{ \gamma, \lambda_{\ell}^{u} \} = D-D_{\rm mmse},$ and where $D_{\rm mmse} = \bE [\Vert \hv^{(i)} -\uv^{(i)}  \Vert^2] ={\rm Tr} \left( \Sigmam^h - \Sigmam^u \right)$ is the MMSE, where ${\rm Tr}(\cdot)$ denotes the trace. Similarly, the distortion-rate function is given by 
		\begin{equation}\label{eq:rdr_expression}
			D_{\hv}^r (R) =D_{\rm mmse} + \sum_{\ell=1}^{\widetilde{r}}\min \{ \gamma', \lambda_{\ell}^{u} \},~\text{for}~R\ge 0
		\end{equation}
		where $\gamma'$ is chosen such that $R = \sum_{\ell=1}^{\widetilde{r}}\left[ \log \frac{\lambda_{\ell}^{u}}{\gamma'} \right]_+$. 
	\end{lemma}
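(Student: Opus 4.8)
The plan is to reduce the remote source coding problem to a standard rate-distortion problem for a Gaussian source with a modified (MMSE) distortion criterion. The key structural fact is the decomposition $\hv^{(i)} = \uv^{(i)} + \ev^{(i)}$, where $\uv^{(i)} = \bE[\hv^{(i)}\mid \yv^{{\rm tr}\,(i)}]$ is the linear MMSE estimate given in \eqref{MMSE-u} and $\ev^{(i)} = \hv^{(i)} - \uv^{(i)}$ is the estimation error. Since the source is jointly Gaussian, the orthogonality principle gives that $\ev^{(i)}$ is uncorrelated with $\uv^{(i)}$, and by Gaussianity they are in fact independent. The error $\ev^{(i)}$ is independent of the measurements $\yv^{{\rm tr}\,(i)}$, so it cannot be reconstructed from the encoder's observations at all. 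This is precisely what produces the additive, irreducible $D_{\rm mmse} = \bE[\Vert \ev^{(i)}\Vert^2] = {\rm Tr}(\Sigmam^h - \Sigmam^u)$ floor in \eqref{eq:rrd_expression}.

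\medskip

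First I would invoke the standard remote-source-coding reduction (as in Berger \cite{berger1971rate} or Cover--Thomas \cite{cover2006elements}): for the MSE distortion in \eqref{eq:err_def}, any estimate $\widehat{\hv}^{(i)}$ produced by the decoder satisfies
\begin{equation}
\bE\bigl[\Vert \hv^{(i)} - \widehat{\hv}^{(i)}\Vert^2\bigr]
= \bE\bigl[\Vert \ev^{(i)}\Vert^2\bigr] + \bE\bigl[\Vert \uv^{(i)} - \widehat{\hv}^{(i)}\Vert^2\bigr]
= D_{\rm mmse} + \bE\bigl[\Vert \uv^{(i)} - \widehat{\hv}^{(i)}\Vert^2\bigr],
\end{equation}
where the cross term vanishes because $\ev^{(i)}$ is independent of $\yv^{{\rm tr}\,(i)}$ and hence of any function $\widehat{\hv}^{(i)}$ of the encoded measurements. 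This identity converts the remote problem (encoding $\yb^S$ to describe the hidden $\hb^S$) into an equivalent \emph{direct} rate-distortion problem: describe the Gaussian side-information process $\{\uv^{(i)}\}$, which \emph{is} a deterministic function of the observations $\yb^S$ and therefore directly available to the encoder, under ordinary MSE distortion, with the distortion budget shifted down by $D_{\rm mmse}$.

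\medskip

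The second step is to apply the classical reverse-waterfilling formula for the rate-distortion function of a vector Gaussian source. The process $\{\uv^{(i)}\}$ is i.i.d.\ $\cg(\muv, \Sigmam^u)$; subtracting the known mean and diagonalizing $\Sigmam^u$ via its eigendecomposition yields $\widetilde{r} = {\rm rank}(\Sigmam^u)$ independent Gaussian scalar components with variances $\lambda_1^u \ge \cdots \ge \lambda_{\widetilde r}^u$. The rate-distortion function of a bank of independent Gaussians under sum-MSE distortion $D' = D - D_{\rm mmse}$ is the reverse-waterfilling expression $R = \sum_{\ell=1}^{\widetilde r}[\log(\lambda_\ell^u/\gamma)]_+$ with $\sum_{\ell}\min\{\gamma,\lambda_\ell^u\} = D'$, which is exactly \eqref{eq:rrd_expression}; inverting this relationship gives \eqref{eq:rdr_expression}. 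Both directions (achievability and converse) are standard once the reduction is in place.

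\medskip

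\textbf{The main obstacle} is justifying the remote-to-direct reduction rigorously, i.e.\ confirming that restricting attention to codes that describe $\uv^{(i)}$ loses no optimality and that the $D_{\rm mmse}$ floor is genuinely unavoidable. The converse direction requires care: one must show that for \emph{any} encoder acting on $\yb^S$, the error decomposes additively with the independent, non-encodable $\ev^{(i)}$ term. This rests on the Gaussian orthogonality/independence structure and the Markov chain $\hv^{(i)} \to \yv^{{\rm tr}\,(i)} \to \widehat{\hv}^{(i)}$; the sufficiency of $\uv^{(i)}$ follows because $\uv^{(i)}$ is a sufficient statistic of $\yv^{{\rm tr}\,(i)}$ for $\hv^{(i)}$ under the jointly Gaussian model. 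Once this is established, the remaining eigenvalue bookkeeping and the waterfilling optimization are routine, and the achievability follows from the usual random-coding argument applied to the Gaussian test channel for $\{\uv^{(i)}\}$.
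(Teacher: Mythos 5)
Your proposal is correct and follows essentially the same route as the paper: the paper likewise reduces the remote problem to the direct rate-distortion problem for the MMSE estimate $\uv$ via the identity $R_{\hv}^r(D) = R_{\uv}(D - D_{\rm mmse})$, and then applies Gaussian reverse water-filling to the eigenvalues of $\Sigmam^u$. The only difference is presentational: you spell out the orthogonality/independence (Wolf--Ziv-style) additive distortion decomposition that justifies the reduction, whereas the paper outsources exactly that step to a citation (the technique behind inequality (15) of \cite{eswaran2019remote}).
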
	
\begin{proof}
See Appendix \ref{app:RD_lemma_proof}.
	\end{proof}

	With the rate-distortion feedback, the user encodes the source $\hb^{S}$ with $R_{\hv}^r(D)$ bits per sample on average and sends the quantization index in the UL via a channel code, after which the BS computes a sequence of channel estimates $\widehat{\hb}^S$. A direct application of the source-channel separation theorem (see \cite{cover2006elements}, exercise 10.17) yields that if the feedback channel has capacity $C^{\rm ul}$ bits per (time-frequency) symbol 
and if we use $\beta_{\rm fb}$ symbols per source vector to send the feedback message, we can achieve an MSE of $D$ in the sense of 
\eqref{eq:d_avg} (i.e., $d_{\rm avg}\le D$) if and only if $\beta_{\rm fb} C^{\rm ul} > R_{\hv}^r(D)$. 
Therefore, for a given feedback dimension $\beta_{\rm fb}$, no feedback strategy can achieve a channel estimation MSE smaller than $D_{\hv}^r ( \beta_{\rm fb} C^{\rm ul} )$. This MSE lower-bound depends on the training and feedback dimensions ($\beta_{\rm tr}$ and $\beta_{\rm fb}$). The dependence on $\beta_{\rm tr}$ is implicit in the following sense: the distortion-rate function depends on the eigenvalues of the covariance $\Sigmam^u$ given in \eqref{eq:Sigma_u}, which in turns is a function of the channel covariance as well as the training matrix with dimension $MN\times \beta_{\rm tr}$. In the general case it is very difficult to make the dependence of the distortion-rate function on the training dimension $\beta_{\rm tr}$ more explicit. However, we characterize this dependence in the large SNR regime via the following theorem.
	\begin{theorem}\label{thm:rate_asymp}
		For channels with covariance rank $r = {\rm rank}(\Sigmam^h)$, the rate-distortion feedback strategy achieves a channel estimation MSE that behaves as $\Theta (\snrdl^{-\alpha_{\rm rd}})$ with probability one over the realizations of the training matrix $\Xm^{\rm tr}$, where the QSE $\alpha_{\rm rd}$ is given by~\footnote{$\mathbf{1}\{ \Ac\}$ denotes the indicator function of the condition $\Ac$, i.e., it returns $1$ wher $\Ac$ is true, and 0 when $\Ac$ is false.}
		\begin{equation}\label{eq:decay_exponent_ub}
			\alpha_{\rm rd}= \min (\beta_{\rm fb}/r,1)\mathbf{1}\{\beta_{\rm tr} \geq r \}.
		\end{equation}	
	\end{theorem}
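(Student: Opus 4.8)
The plan is to read off the achievable MSE from the distortion–rate function of Lemma~\ref{lem:rate_distortion} and study its large-$\snrdl$ asymptotics. By the source–channel separation argument preceding the statement, the best achievable MSE equals $D_{\hv}^r(\beta_{\rm fb}C^{\rm ul})$ with $C^{\rm ul}=\log(M\kappa\,\snrdl)$, so by \eqref{eq:rdr_expression} it suffices to track two quantities as $\snrdl\to\infty$: the irreducible floor $D_{\rm mmse}={\rm Tr}(\Sigmam^h-\Sigmam^u)$ and the reverse–waterfilling sum $\sum_{\ell=1}^{\widetilde r}\min\{\gamma',\lambda_\ell^u\}$, where $\gamma'$ solves $\beta_{\rm fb}C^{\rm ul}=\sum_{\ell=1}^{\widetilde r}[\log(\lambda_\ell^u/\gamma')]_+$.

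First I would collapse the problem onto the $r$-dimensional signal subspace. Writing $\Sigmam^h=\Um\Lambdam\Um^\herm$ with $\Um\in\bC^{MN\times r}$ isometric and $\Lambdam\succ0$ diagonal, and factoring $\Xm^{\rm tr}=\sqrt{\snrdl/M}\,\Wm$ with $\Wm$ having i.i.d.\ $\cg(0,1)$ entries, unitary invariance makes $\widetilde\Wm\eqdef\Um^\herm\Wm$ an $r\times\beta_{\rm tr}$ matrix with i.i.d.\ $\cg(0,1)$ entries. A Woodbury manipulation then expresses the error covariance restricted to the signal subspace as $\Sigmam^e=(\Lambdam^{-1}+(\snrdl/M)\,\widetilde\Wm\widetilde\Wm^\herm)^{-1}$, with $D_{\rm mmse}={\rm Tr}(\Sigmam^e)$, while the nonzero $\lambda_\ell^u$ are the eigenvalues of $\Lambdam-\Sigmam^e$. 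The almost-sure clause of the theorem enters exactly here: for fixed dimensions a Gaussian $\widetilde\Wm$ satisfies ${\rm rank}(\widetilde\Wm\widetilde\Wm^\herm)=\min(r,\beta_{\rm tr})$ with probability one, which pins down $\widetilde r=\min(r,\beta_{\rm tr})$ and governs the spectrum of $\Sigmam^e$.

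The two cases now separate. If $\beta_{\rm tr}\ge r$ then $\widetilde\Wm\widetilde\Wm^\herm\succ0$ a.s., so $\Sigmam^e$ vanishes at rate $\snrdl^{-1}$, giving $D_{\rm mmse}=\Theta(\snrdl^{-1})$, $\widetilde r=r$, and $\lambda_\ell^u\to\lambda_\ell^h=\Theta(1)$. Feeding $\beta_{\rm fb}C^{\rm ul}\to\infty$ into the waterfilling equation drives $\gamma'\to0$, so eventually $\gamma'<\lambda_\ell^u$ for all $\ell$ and the equation solves to $\gamma'=\big(\prod_{\ell=1}^r\lambda_\ell^u\big)^{1/r}(M\kappa\,\snrdl)^{-\beta_{\rm fb}/r}=\Theta(\snrdl^{-\beta_{\rm fb}/r})$, whence $D_{\hv}^r(\beta_{\rm fb}C^{\rm ul})=D_{\rm mmse}+r\gamma'=\Theta(\snrdl^{-1})+\Theta(\snrdl^{-\beta_{\rm fb}/r})=\Theta(\snrdl^{-\min(1,\beta_{\rm fb}/r)})$, i.e.\ $\alpha_{\rm rd}=\min(\beta_{\rm fb}/r,1)$. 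If instead $\beta_{\rm tr}<r$, then $\widetilde\Wm\widetilde\Wm^\herm$ has a null space of dimension $r-\beta_{\rm tr}>0$ spanned by some isometric $\Vm_2$; on that subspace the $(\snrdl/M)$-term vanishes and $\Sigmam^e\to\Vm_2(\Vm_2^\herm\Lambdam^{-1}\Vm_2)^{-1}\Vm_2^\herm$, a fixed positive-semidefinite operator of positive trace, so $D_{\rm mmse}=\Theta(1)$. Since $D_{\hv}^r(\beta_{\rm fb}C^{\rm ul})\ge D_{\rm mmse}$ cannot vanish, $\alpha_{\rm rd}=0$, matching the indicator $\mathbf 1\{\beta_{\rm tr}\ge r\}$.

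The hard part will be upgrading these spectral limits from ``in expectation'' to ``with probability one'' with $\Theta$-constants that are a.s.\ finite and strictly positive uniformly in $\snrdl$. This comes down to showing that the extreme nonzero singular values of the fixed-size Gaussian matrix $\widetilde\Wm$ are a.s.\ bounded away from $0$ and $\infty$; granted that, a block/Schur-complement expansion of $(\Lambdam^{-1}+(\snrdl/M)\widetilde\Wm\widetilde\Wm^\herm)^{-1}$ delivers the stated orders term by term. The remaining bookkeeping—checking that the additive $\log(M\kappa)$ inside $C^{\rm ul}$ and the convergence $\prod_\ell\lambda_\ell^u\to\prod_\ell\lambda_\ell^h$ only perturb $\gamma'$ by bounded factors—leaves the $\Theta$-exponent untouched and is routine.
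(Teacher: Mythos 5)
Your proposal is correct and follows essentially the same route as the paper's proof: reduce via source--channel separation to the asymptotics of $D_{\hv}^r(\beta_{\rm fb}C^{\rm ul})$, show $D_{\rm mmse}=\Theta(\snrdl^{-1})$ or $\Theta(1)$ according to whether the projected Gaussian pilot Gram matrix has full rank $r$ (which holds a.s.\ iff $\beta_{\rm tr}\ge r$), and solve the reverse waterfilling at high rate to get the $\Theta(\snrdl^{-\beta_{\rm fb}/r})$ excess term. The only differences are cosmetic---you work with the information-form error covariance $(\Lambdam^{-1}+(\snrdl/M)\widetilde\Wm\widetilde\Wm^\herm)^{-1}$ and a Schur-complement limit, where the paper uses Sherman--Morrison--Woodbury plus trace sandwich inequalities---and your closing worry about the ``hard part'' is moot, since for a fixed (a.s.\ full-rank) realization of the pilot matrix the nonzero singular values are automatically bounded away from $0$ and $\infty$.
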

	\begin{proof}
		See Appendix \ref{app:rate_asymp_thm_proof}.
	\end{proof}
%

%
	\section{Practical Feedback Schemes}\label{sec:practical_feedback}
	
The rate-distortion bound of Section \ref{sec:RD_LB} is achieved by encoding infinite-dimensional blocks of MMSE channel estimates $\{\uv^{(i)} \}_{i=1}^{S},\, S\to \infty$ given in \eqref{MMSE-u}. As already remarked in Section \ref{sec:intro}, this would incur an impractical feedback delay and, 
	although minimizing the channel
	estimation MSE at the BS, it would yield {\em stale} channel estimates that cannot be used for DL precoding.  
	In this section we examine practical one-shot estimation and feedback schemes where the feedback message at each frame $i$ is a function of the 
	received DL pilot  signal $\yv^{{\rm tr} \,(i)}$ in the current frame only. Interestingly, as we will show, such schemes achieve CSIT estimation errors similar to that of the optimal rate-distortion scheme at the cost of a tolerable increase in the feedback rate. 
	
	\subsection{Entropy-Coded Scalar Quantization}
	
	The feedback method based on entropy-coded scalar quantization (ECSQ) assumes channel covariance knowledge at both the BS and the UE side and computes the feedback message as follows. In any frame, the UE computes the MMSE channel estimate $\uv = \bE [\hv | \yv^{\rm tr}]$ from the pilot signal $\yv^{\rm tr}$, where we have dropped the superscript $i$ from the variables for simplicity. The Karhunen-Lo{\`e}ve (KL) expansion of $\uv$ is given by 
	\begin{equation}\label{eq:KL_exp}
		\uv =  \muv + \Fm \wv, 
	\end{equation}
	where $\Fm = [\fv_1,\ldots,\fv_{MN}]$ contains the (orthonormal) eigenvectors $\{\fv_{\ell} \}_{\ell=1}^{MN}$ of $\Sigmam^u$ as its columns, $\wv = [w_1,\ldots,w_{MN}]^\transp$, with  $w_{\ell} \sim \cg(0,\lambda_{ \ell}^u)$,  are the KL 
coefficients, and $\{\lambda_{ \ell}^u\}$ are the eigenvalues of $\Sigmam^u$. 
Since $\muv$ and $\Fm$ 	 are known and $\Fm$ is unitary,  
the MSE incurred by quantizing the KL coefficients $\wv$ is the same as that of quantizing $\uv$, i.e. $\bE [\Vert \wv - \widehat{\wv}\Vert^2 ] =  \bE [\Vert \uv - \widehat{\uv}\Vert^2 ]$.
	
We follow the classical result of \cite{ziv1985universal} on entropy-coded dithered scalar quantization.
Let $\Sc = \{ \ell \, : \, \lambda_{ \ell}^u > 0 \}$ of size $\widetilde{r} = |\Sc|$ denote the set of KL coefficients with positive variance, and denote by $\wv_{\Sc}$ the vector of such coefficients.
Consider a uniform scalar quantizer $Q_1: \bR \to \bR$ with quantization points $C_1 = \{ 0 , \pm \Delta,\pm 2\Delta,\ldots \}$. 
Define the dithering random variable $Z_q$  statistically independent of $\wv$ and uniformly distributed over the interval $[-\Delta/2,\Delta/2]$. 
This dithering variable is known to both the BS and the UE.\footnote{Notice that the dithering variable is 
common randomness, which can be approached in practice by suitably synchronized 
pseudo-random number generation, analogous to random-spreading CDMA, frequency hopping, and many other schemes that require 
some form of  common randomness. Furthermore,  the dithering is needed to obtain a tractable rate-distortion distortion bound, but in practice it is well-known that a slightly better performance can be achieved without dithering. Therefore, the assumption of common randomness is not at all a limiting factor in this scheme.}
Define a vector $\zv_q$ of dimension $\widetilde{r}$ as $\zv_q = [Z_q+jZ_q,\ldots,Z_q+jZ_q]$. The UE represents the vector of coefficients $\wv$ with a code point $\widehat{\wv}$ whose elements on the index set $\Sc$ are given by $\widehat{\wv}_{\Sc} = Q_1(\wv_{\Sc}+\zv_q)-\zv_q$, where $Q_1$ is applied element-wise to the real and imaginary parts of its vector input, 
and the rest of its elements are set to $0$. 
From Lemma 1 of \cite{ziv1985universal} we have that this scheme yields an MSE satisfying the bound
	\begin{equation}\label{eq:dither_err}
		\bE \left[ \Vert \widehat{\wv} - \wv \Vert^2 \right] \leq \widetilde{r}\Delta^2 / 6.
	\end{equation}
	In fact, this distortion is {\em universal}, in that it is independent of the distribution of $\wv$. 
	It follows that the distortion $D$ in estimating the channel is achievable by choosing the step size 
	$\Delta =\sqrt{6D/\widetilde{r}}$. It is also shown in \cite{ziv1985universal} that for the target distortion value of $D$, the 
	above dithered scalar quantizer  followed by 
	lossless \textit{entropy encoding} of the (discrete) quantized symbols 
	requires an excess rate with respect to the rate-distortion bound not larger than 1.508 bits per complex-valued coefficient. 
	Hence, the ECSQ quantizer can achieve a rate of 
	\begin{equation}\label{eq:ECSQ_RD}
		R_{\text{ECSQ}} (D)= R_{\hv}^r(D) + 1.508 \, \widetilde{r}.  
	\end{equation}
	Note that when the target distortion is low (e.g., in the high-SNR regime where a large channel capacity is available), the overhead of $1.508 \, \widetilde{r}$ bits in quantizing an $MN$-dimensional channel becomes very small.  The following proposition states that the feedback scheme based on ECSQ yields the same QSE in estimating the channel as that of the optimal rate-distortion feedback. 

	\begin{proposition}\label{prop:ecsq_alpha}
		The ECSQ feedback strategy achieves the same QSE as the rate-distortion feedback strategy, i.e. $\alpha_{\rm ecsq} = \alpha_{\rm rd} $, where $\alpha_{\rm rd}$ is given in \eqref{eq:decay_exponent_ub}.
	\end{proposition}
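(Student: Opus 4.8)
The plan is to show that the ECSQ distortion exceeds the optimal rate-distortion distortion by at most a bounded multiplicative constant for all sufficiently large $\snrdl$, so that both quantities share the same SNR exponent.

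First I would convert the rate overhead of ECSQ into a distortion statement. By the same capacity/separation argument used to define the lower bound $D_{\hv}^r(\beta_{\rm fb}C^{\rm ul})$ just before Theorem~\ref{thm:rate_asymp}, together with the ECSQ rate formula \eqref{eq:ECSQ_RD}, the smallest distortion ECSQ can achieve within the feedback budget $\beta_{\rm fb}C^{\rm ul}$ is obtained by solving $R_{\hv}^r(D)+1.508\,\widetilde{r}\le \beta_{\rm fb}C^{\rm ul}$ for the smallest $D$, namely
\[
 D_{\rm ecsq}=D_{\hv}^r\bigl(\beta_{\rm fb}C^{\rm ul}-1.508\,\widetilde{r}\bigr),
\]
to be compared with the optimal $D_{\rm rd}=D_{\hv}^r(\beta_{\rm fb}C^{\rm ul})$. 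Since $D_{\hv}^r(\cdot)$ is non-increasing, the inequality $D_{\rm ecsq}\ge D_{\rm rd}$ is immediate, which already yields $\alpha_{\rm ecsq}\le\alpha_{\rm rd}$ through the definition \eqref{eq:QSE_def}; all the work lies in the matching upper bound on $D_{\rm ecsq}$.

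For that bound I would use the explicit reverse water-filling expression of Lemma~\ref{lem:rate_distortion}. Writing the excess distortion over the MMSE as $\delta(R)=D_{\hv}^r(R)-D_{\rm mmse}=\sum_{\ell=1}^{\widetilde{r}}\min\{\gamma',\lambda_{\ell}^u\}$, I would argue that for large $\snrdl$ the code operates in the high-rate regime in which every nonzero eigenvalue is active, so that $\delta(R)=\widetilde{r}\,g\,2^{-R/\widetilde{r}}$ with $g=\bigl(\prod_{\ell}\lambda_{\ell}^u\bigr)^{1/\widetilde{r}}$ the geometric mean of the active eigenvalues. In this regime the constant rate shift $c=1.508\,\widetilde{r}$ rescales the excess distortion by exactly $2^{c/\widetilde{r}}=2^{1.508}$, independently of $\widetilde{r}$, so that $\delta\bigl(\beta_{\rm fb}C^{\rm ul}-1.508\,\widetilde{r}\bigr)=2^{1.508}\,\delta(\beta_{\rm fb}C^{\rm ul})$. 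Combining this with $D_{\rm mmse}\le D_{\rm rd}$ gives the sandwich
\[
 D_{\rm rd}\le D_{\rm ecsq}=D_{\rm mmse}+2^{1.508}\bigl(D_{\rm rd}-D_{\rm mmse}\bigr)\le 2^{1.508}\,D_{\rm rd},
\]
so $D_{\rm ecsq}=\Theta(D_{\rm rd})=\Theta(\snrdl^{-\alpha_{\rm rd}})$ and hence $\alpha_{\rm ecsq}=\alpha_{\rm rd}$, with $\alpha_{\rm rd}$ as in \eqref{eq:decay_exponent_ub}.

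The hard part will be justifying the high-rate regime uniformly as $\snrdl\to\infty$, i.e.\ that the water level $\gamma'$ eventually drops below $\min_{\ell}\lambda_{\ell}^u$. This requires the nonzero eigenvalues $\{\lambda_{\ell}^u\}$ to remain bounded away from $0$ (and from $\infty$) while the budget $\beta_{\rm fb}C^{\rm ul}=\beta_{\rm fb}\log(M\kappa\,\snrdl)$ diverges. When $\beta_{\rm tr}\ge r$ this holds with probability one over $\Xm^{\rm tr}$, because $\Sigmam^u\to\Sigmam^h$ and $\widetilde{r}\to r$ as $\snrdl\to\infty$ --- the same convergence already exploited in the proof of Theorem~\ref{thm:rate_asymp}. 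When $\beta_{\rm tr}<r$ no regime analysis is needed: there $\alpha_{\rm rd}=0$ and, since the rate argument still diverges, both $D_{\rm rd}$ and $D_{\rm ecsq}$ converge to the strictly positive constant $D_{\rm mmse}={\rm Tr}(\Sigmam^h-\Sigmam^u)$, giving $\alpha_{\rm ecsq}=0=\alpha_{\rm rd}$ directly.
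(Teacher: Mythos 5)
Your proof is correct and follows essentially the same route as the paper's (Remark~\ref{remark:alpha_ECSQ}): both work in the high-rate regime of the reverse water-filling formula, observe that the $1.508\,\widetilde{r}$ overhead is an SNR-independent rate shift, and conclude that it only multiplies the excess distortion $D-D_{\rm mmse}$ by a constant factor ($2^{1.508}$ in your explicit sandwich, an additive constant in the paper's modified $f(r)$), leaving the $\Theta(\snrdl^{-\alpha_{\rm rd}})$ scaling unchanged. Your treatment is somewhat more explicit than the paper's — in particular the two-sided bound $D_{\rm rd}\le D_{\rm ecsq}\le 2^{1.508}D_{\rm rd}$, the separate handling of $\beta_{\rm tr}<r$, and the justification that the water level eventually falls below $\min_\ell \lambda_\ell^u$ — but these are elaborations of the same argument rather than a different one.
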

	\begin{proof}
		See Remark \ref{remark:alpha_ECSQ} in Appendix \ref{app:rate_asymp_thm_proof}.
	\end{proof}
	The QSE of ECSQ is illustrated in Fig. \ref{fig:quality_scaling_exponent} (left), exactly identical to that of the rate-distortion feedback. 
	Accordingly, ECSQ suffices to achieve the (optimal) system DoFs of $1 + \alpha_{\rm rd} (K - 1)$ of the rate-distortion feedback. 

	\subsection{Analog Feedback}\label{sec:analog_feedback}
	
Feedback based on ECSQ requires full knowledge of the DL channel statistics at the UE in order to compute 
the MMSE estimate in \eqref{MMSE-u} and its KL expansion in \eqref{eq:KL_exp}. While the DL channel covariance 
can be estimated at the BS side from the UL pilots sent by the UE to the BS followed by a suitable UL-DL covariance transformation (e.g., see \cite{miretti2018fdd,khalilsarai2018fdd}), 
providing the DL channel statistics  to the UEs is significantly harder. 
The reason is that, as explained before, in massive MIMO  
we aim at minimizing the DL training dimension $\beta_{\rm tr}$ such that it is significantly less than $M$. 
Methods for covariance estimation from low dimensional projections (i.e., $\beta^{\rm tr} < M$) have been studied (e.g., see \cite{haghighatshoar2016massive}).  Nevertheless, they are computationally quite intensive. 
Therefore, it is preferable not to rely on the availability of DL channel covariance at the user side. 

The analog feedback (AF) strategy presented in this section is an example of a method that does not rely on such assumption.	
In AF, the UE extracts its $\beta_{\rm tr}$ received DL pilot symbols  from the DL training and feeds them back to the BS via Quadrature Amplitude Modulation (QAM) symbols with unquantized I and Q components via $\beta_{\rm fb} = \zeta \beta_{\rm tr}$ channel symbols (e.g., see \cite{thomas2005obtaining,marzetta2006fast}). 
In particular, the received training vector $\yv^{\rm tr}$ is modulated by a full-rank ``spreading'' matrix $\Psim$ of dimension $\beta_{\rm tr} \times \beta_{\rm fb}$.  The received feedback at the BS is given by 
	\begin{equation}\label{eq:af_bs_signal}
		\yv^{\rm af} = \yv^{\rm tr} \Psim + \zv^{\rm ul} = \hv^\herm \Xm^{\rm tr} \Psim +\zv^{\rm af},
	\end{equation}
	where $\zv^{\rm ul} \sim \cg (\mathbf{0},\mathbf{I})$ is the AWGN over the UL channel and $\zv^{\rm af} = \zv^{\rm tr}\Psim +\zv^{\rm ul}$. The scalar $\zeta=\beta_{\rm fb}/\beta_{\rm tr}$ denotes the number of feedback channel uses per training coefficient. 

The AF transmitted signal in \eqref{eq:af_bs_signal} consists of $\beta_{\rm fb}$ channel uses with symbols
$x_i^{\rm fb} = \yv^{\rm tr}\psiv_i$, where $\psiv_i$ is the $i$-th column of $\Psim$, for $i = 1, \ldots, \beta_{\rm fb}$.
From the UL per-user capacity scaling given in Section \ref{sec:ch_training},  the feedback transmit power per channel use is normalized such that
	\begin{equation}\label{eq:psi_vecs}
		\begin{aligned}
			\bE[|\yv^{\rm tr}\psiv_i|^2] = \psiv_i^\herm \Rm_{\yv^{\rm tr}} \psiv_i  = M \snrul,
		\end{aligned}
	\end{equation}
	for all $ i  =1,\ldots,\beta_{\rm fb}$, where 
	\[ \Rm_{\yv^{\rm tr}} = \bE [\yv^{\rm tr}\yv^{{\rm tr}\, \herm} ] = \Xm^{{\rm tr}\, \herm} (\Sigmam^h +\muv \muv^\herm)\Xm^{\rm tr} + \mathbf{I}\] 
	is the autocorrelation matrix of the receiver DL training signal. Note that selecting a set of $\beta_{\rm fb}$ vectors that satisfy \eqref{eq:psi_vecs} and which contain a subset of $\min(\beta_{\rm tr},\beta_{\rm fb})$ linearly independent elements is always possible because $\Rm_{\yv^{\rm tr}}$ 
	is of rank $\beta_{\rm tr}$. After receiving $\yv^{\rm af}$ given by (\ref{eq:af_bs_signal}), the BS computes the MMSE estimate of the channel given the feedback as 
	\begin{equation}
		\wh{\hv} = \bE\left[\hv | \yv^{\rm af} \right] = \Sigmam^h \Xm^{\rm tr} \Psim \Sigmam_{\yv^{\rm af}}^{-1} (\yv^{\rm af}-\muv^\herm \Xm^{\rm tr} \Psim)^\herm + \muv, 
	\end{equation}
	where $\Sigmam_{\yv^{\rm af}} = \Psim^\herm \Xm^{{\rm tr}\herm} \Sigmam^h \Xm^{\rm tr} \Psim +\Psim^\herm \Psim + \mathbf{I}  $. Note that, 
	unlike the rate-distortion quantizer and the ECSQ, AF does not need channel covariance knowledge at the UE and the processing at the UE is very simple, requiring only a matrix-vector multiplication. 
The CSIT estimation error with AF can be computed as
	\begin{equation}
		\begin{aligned}
			D &= \bE [ \Vert\hv - \widehat{\hv} \Vert^2 ]
			= {\rm Tr}\left( \Sigmam^h -  \Sigmam^h \Xm^{\rm tr} \Psim \Sigmam_{\yv^{\rm af}}^{-1} \Psim^\herm \Xm^{{\rm tr}\, \herm} \Sigmam^h\right).\\
		\end{aligned}
	\end{equation}
	The following theorem yields the scaling law of this error for large SNR.
	\begin{theorem}\label{thm:af_distortion_modified}
		The AF strategy achieves a channel estimation error of $\Theta (\snrdl^{-\alpha_{\rm af}})$ with probability one over the realizations of the training matrix $\Xm^{\rm tr}$, where the QSE $\alpha_{\rm af}$ is given by
		\begin{equation}\label{eq:af_decay_exponent}
			\alpha_{\rm af} = \mathbf{1}\{\min(\beta_{\rm tr},\beta_{\rm fb}) \geq r\}.
		\end{equation}
	\end{theorem}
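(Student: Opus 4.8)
The plan is to convert the trace expression for $D$ into the ``information form'' of the Gaussian MMSE error and then read off the QSE from the rank and the $\snrdl$-scaling of the associated Fisher information, restricted to the $r$-dimensional subspace in which $\hv$ effectively lives. I would write the composite measurement as $\Am\eqdef\Xm^{\rm tr}\Psim$, so that $\yv^{\rm af}=\hv^\herm\Am+\zv^{\rm af}$ with $\zv^{\rm af}=\zv^{\rm tr}\Psim+\zv^{\rm ul}$ of covariance $\Sigmam_{\zv^{\rm af}}=\Psim^\herm\Psim+\bfI$, and let $\Sigmam^h=\Um_r\Lambdam_r\Um_r^\herm$ be the reduced eigendecomposition with $\Um_r\in\bC^{MN\times r}$ and $\Lambdam_r\succ\mathbf{0}$. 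Applying the matrix inversion lemma to the stated expression for $D$, the error covariance is supported on ${\rm range}(\Sigmam^h)$ and, in the basis $\Um_r$, equals $(\Lambdam_r^{-1}+\Jm)^{-1}$, where, after substituting $\Xm^{\rm tr}=\sqrt{\snrdl/M}\,\Gm$ with $\Gm$ having i.i.d.\ $\cg(0,1)$ entries and setting $\Gm_r\eqdef\Gm^\herm\Um_r\in\bC^{\beta_{\rm tr}\times r}$,
\[
\Jm\eqdef\Um_r^\herm\Am\,\Sigmam_{\zv^{\rm af}}^{-1}\,\Am^\herm\Um_r=\frac{\snrdl}{M}\,\Gm_r^\herm\,\Km\,\Gm_r,\qquad \Km\eqdef\Psim(\Psim^\herm\Psim+\bfI)^{-1}\Psim^\herm .
\]
Here $\Gm_r$ has i.i.d.\ standard complex Gaussian entries, $D={\rm Tr}\big[(\Lambdam_r^{-1}+\Jm)^{-1}\big]$, and the Loewner bounds $\mathbf{0}\preceq(\Lambdam_r^{-1}+\Jm)^{-1}\preceq\Lambdam_r$ already give $D=O(1)$ for every SNR.

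I would then settle the rank of $\Jm$. Since $\Km$ shares the range of $\Psim$ it has rank $\min(\beta_{\rm tr},\beta_{\rm fb})$, while $\Gm_r$ has full rank $\min(r,\beta_{\rm tr})$ almost surely; a generic-position argument, valid with probability one over the Gaussian $\Xm^{\rm tr}$, then gives ${\rm rank}(\Jm)=\min(r,\beta_{\rm tr},\beta_{\rm fb})$. This immediately disposes of the case $\min(\beta_{\rm tr},\beta_{\rm fb})<r$: here ${\rm rank}(\Jm)<r$, so $\Jm$ has a unit kernel vector $\vv$, for which $\vv^\herm(\Lambdam_r^{-1}+\Jm)\vv=\vv^\herm\Lambdam_r^{-1}\vv\le 1/\lambda_{\min}(\Lambdam_r)$, whence $\lambda_{\max}\big((\Lambdam_r^{-1}+\Jm)^{-1}\big)\ge\lambda_{\min}(\Lambdam_r)$ and therefore $D\ge\lambda_{\min}(\Lambdam_r)$, a positive constant independent of SNR. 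Together with $D=O(1)$ this yields $D=\Theta(1)$, i.e.\ $\alpha_{\rm af}=0$, in agreement with \eqref{eq:af_decay_exponent}.

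For the complementary regime $\min(\beta_{\rm tr},\beta_{\rm fb})\ge r$ the target is that all $r$ eigenvalues of $\Jm$ be $\Theta(\snrdl)$. The upper direction is immediate and holds for any spreading matrix: $\Km\preceq\bfI$ gives $\lambda_{\max}(\Jm)\le(\snrdl/M)\Vert\Gm_r\Vert^2=O(\snrdl)$, which forces $D=\Omega(\snrdl^{-1})$. The matching statement $D=O(\snrdl^{-1})$ needs a lower bound $\lambda_{\min}(\Jm)\ge c\,\snrdl$ with $c>0$ independent of SNR; granting it, $\Lambdam_r^{-1}+\Jm\succeq c\,\snrdl\,\bfI$ yields $D\le r/(c\,\snrdl)$, and hence $D=\Theta(\snrdl^{-1})$ and $\alpha_{\rm af}=1$.

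The lower bound on $\lambda_{\min}(\Jm)$ is the crux, and its delicacy is that $\Km$ is itself a function of $\Xm^{\rm tr}$ through the per-column power normalization $\psiv_i^\herm\Rm_{\yv^{\rm tr}}\psiv_i=M\snrul=\kappa M\snrdl$, so one cannot treat $\Km$ and the random subspace ${\rm range}(\Gm_r)$ as independent. The plan is to exploit the structure $\Rm_{\yv^{\rm tr}}=\Xm^{{\rm tr}\,\herm}(\Sigmam^h+\muv\muv^\herm)\Xm^{\rm tr}+\bfI=\frac{\snrdl}{M}\Gm^\herm(\Sigmam^h+\muv\muv^\herm)\Gm+\bfI$, whose ``informative'' eigenvalues grow as $\Theta(\snrdl)$ on a subspace containing ${\rm range}(\Gm_r)$. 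Consequently, a valid spreading matrix that directs its energy into this informative subspace has columns of norm $\Vert\psiv_i\Vert=\Theta(1)$ there, so the corresponding singular values of $\Psim$, and hence the eigenvalues of $\Km$ restricted to ${\rm range}(\Gm_r)$, are bounded away from $0$ and $1$ uniformly in SNR. Since $\beta_{\rm fb}\ge r$ independent directions are available in this subspace and $\Gm_r$ has full column rank $r$ almost surely, $\Km^{1/2}\Gm_r$ retains full rank $r$ with smallest singular value bounded below by a positive constant, giving $\lambda_{\min}(\Jm)=\Theta(\snrdl)$ and completing the argument with probability one over $\Xm^{\rm tr}$. The same genericity underlies the ``probability one'' qualifier and parallels the spectral analysis used for Theorem~\ref{thm:rate_asymp}.
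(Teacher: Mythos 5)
Your skeleton is the paper's own proof in different clothing: your ``information form'' $D=\mathrm{Tr}\bigl[(\Lambdam_r^{-1}+\Jm)^{-1}\bigr]$ is algebraically identical to the paper's error expression \eqref{eq:distor_1}, since the matrix defined in \eqref{eq:G_def} equals $\Lambdam_h^{1/2}\Jm\Lambdam_h^{1/2}$ in your notation and $\bfI-\Gm+\Gm(\bfI+\Gm)^{-1}\Gm=(\bfI+\Gm)^{-1}$. Your handling of the regime $\min(\beta_{\rm tr},\beta_{\rm fb})<r$ (a kernel vector of $\Jm$ forces $D\ge\lambda_{\min}(\Lambdam_r)$) and of the direction $D=\Omega(\snrdl^{-1})$ when $\min(\beta_{\rm tr},\beta_{\rm fb})\ge r$ are correct and coincide with the paper's rank dichotomy and genericity argument.

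However, at the step you yourself identify as the crux, your resolution does not actually close the gap. The claim that ``a valid spreading matrix that directs its energy into this informative subspace has columns of norm $\Theta(1)$ there, so \ldots\ the eigenvalues of $\Km$ restricted to ${\rm range}(\Gm_r)$ are bounded away from $0$ \ldots\ uniformly in SNR'' is a non sequitur as written: the AF scheme does not steer $\Psim$ toward the informative subspace at all --- it takes an arbitrary full-rank, SNR-independent set of unit directions and applies only a per-column scalar power normalization --- and $\Theta(1)$ column norms by themselves imply nothing about a uniform-in-SNR lower bound of $\Km$ on the random subspace ${\rm range}(\Gm_r)$. The paper closes this with a short but decisive computation which your plan should adopt: writing each column as $\psiv_i=\sqrt{a_i}\,\phiv_i$ with fixed unit-norm $\phiv_i$, the constraint \eqref{eq:psi_vecs} forces $a_i=\frac{M\kappa\snrdl}{c_i\snrdl+1}$ with $c_i=\phiv_i^\herm\Xm_0^{{\rm tr}\,\herm}\Sigmam^h\Xm_0^{\rm tr}\phiv_i$, hence $a_i=\Theta(1)$; consequently $\Km=\Phim(\Phim^\herm\Phim+\Sm)^{-1}\Phim^\herm$, where $\Phim=[\phiv_1,\ldots,\phiv_{\beta_{\rm fb}}]$ is SNR-independent and $\Sm$ is diagonal with $\Theta(1)$ entries. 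This sandwiches $\Km$ between two SNR-independent positive semi-definite matrices whose common range ${\rm range}(\Phim)$ has dimension $\min(\beta_{\rm tr},\beta_{\rm fb})$; only then does your genericity argument (${\rm range}(\Gm_r)$ meets $\ker(\Phim^\herm)$ trivially almost surely when $\min(\beta_{\rm tr},\beta_{\rm fb})\ge r$) deliver $\lambda_{\min}(\Jm)=\Theta(\snrdl)$ with probability one, and hence $D=\Theta(\snrdl^{-1})$. With that substitution your outline becomes a complete proof, essentially identical to the paper's Appendix \ref{app:af_thm_proof_modified}.
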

	
	\begin{proof}
		See Appendix \ref{app:af_thm_proof_modified}.
	\end{proof}
	
	Fig. \ref{fig:quality_scaling_exponent} summarizes the QSE results seen so far. On the left, we show as a heat map the QSE achieved by the ECSQ, which by 
	Proposition \ref{prop:ecsq_alpha} coincides with the optimal QSE achieved by the rate-distortion scheme.  The QSE of AF is illustrated as a heat map 
	in Fig. \ref{fig:quality_scaling_exponent} (right). Comparing the right and left heat maps we notice that the QSE achieved by AF is optimal 
	for all training and feedback dimensions $(\beta_{\rm tr},\beta_{\rm fb})$ belonging to regions $\Rc_1$ and $\Rc_3$. In region $\Rc_2$, 
	rate-distortion feedback and ECSQ achieve an exponent of $\beta_{\rm fb}/r > 0$, whereas AF has exponent zero, and is therefore strictly sub-optimal.

	\begin{figure*}[t]
		\centerline{\includegraphics[scale=0.6]{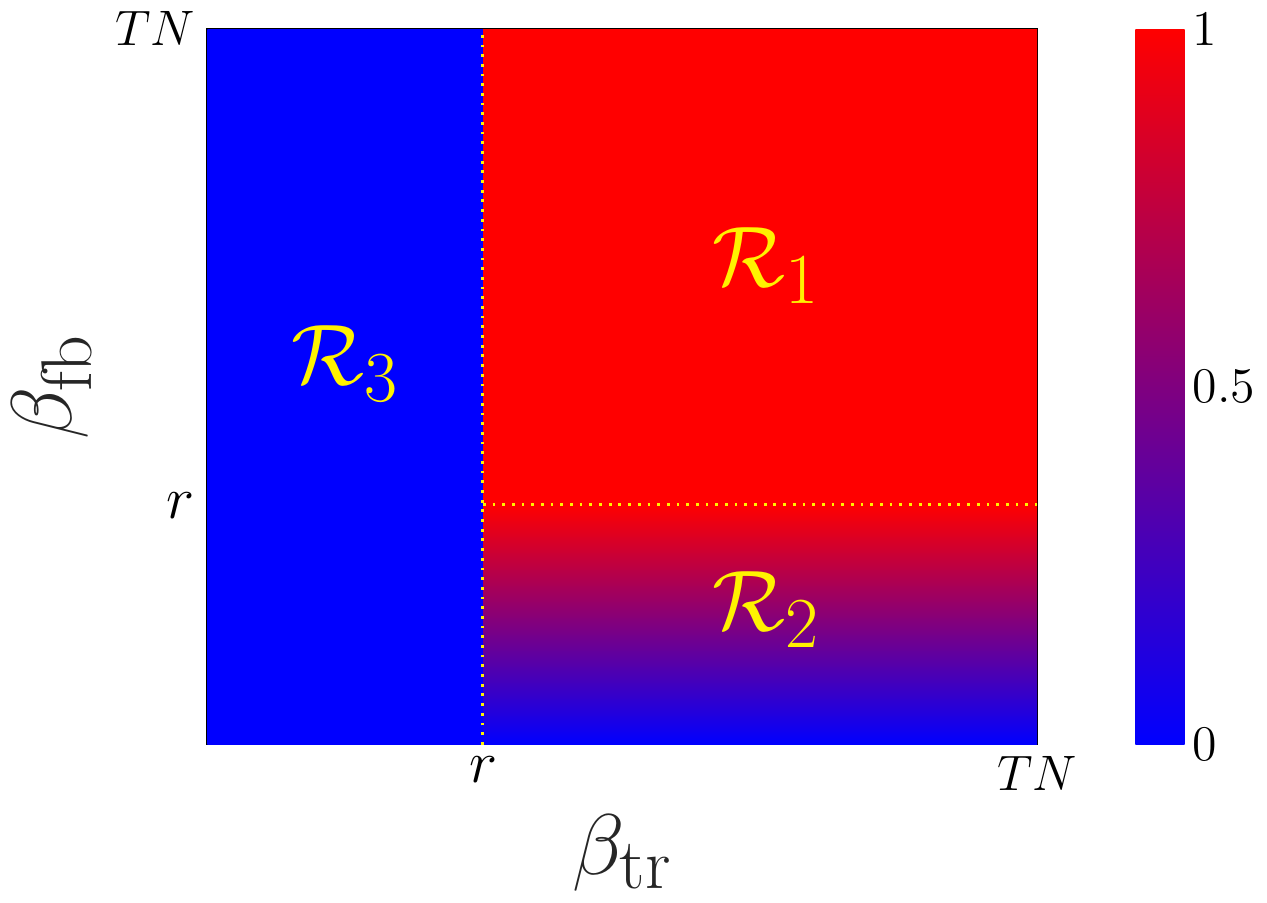}
			\includegraphics[scale=0.6]{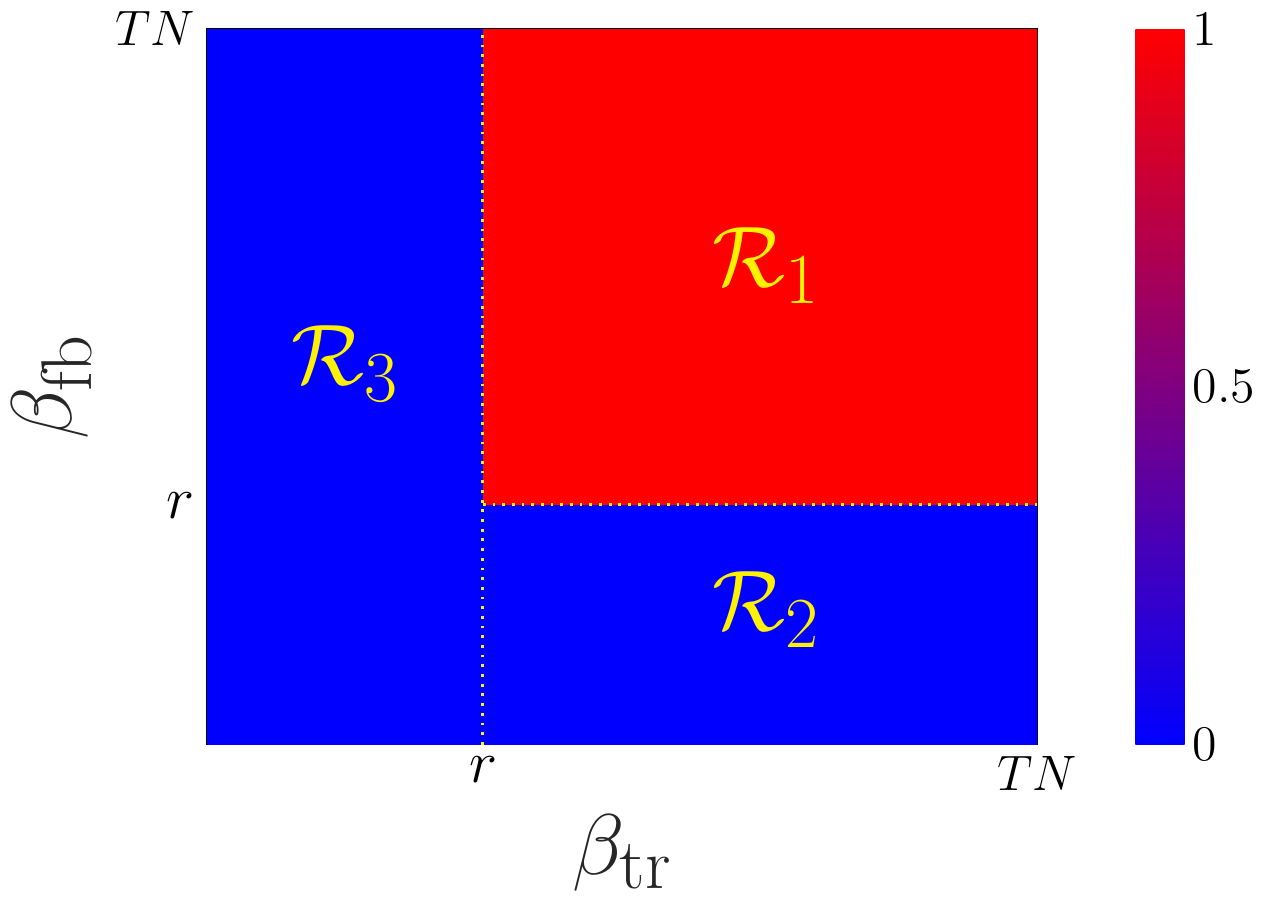}}
		\caption{The QSE as a function of $(\beta_{\rm tr},\beta_{\rm fb})$, represented as a heat-map. The left figure corresponds to the optimal rate-distortion feedback $\alpha_{\rm rd}$, which also coincides with the QSE $\alpha_{\rm ecsq}$ achieved by ECSQ, whereas the right figure corresponds to AF $\alpha_{\rm af}$.}
		\label{fig:quality_scaling_exponent}
	\end{figure*}

	\subsection{3GPP-Inspired Feedback Scheme}\label{sec:CS_based}
Release 16 of the the 3GPP new radio suggests a new wideband CSI feedback method that reduces overhead by sending only the strongest coefficients of the channel decomposition in terms of space-frequency basis vectors (see, e.g. \cite{3GPPsource1,3GPPsource2,ahmed2019overhead}). We describe the concept of this highly popular scheme in the industry, and based on it, introduce the {\em 3GPP-inspired} feedback method studied in this work.
Let $\Hm = [\hv[1], \ldots, \hv[N]]$ denote the $M \times N$ (antenna-subcarriers) channel matrix for a given UE, such that $\hv = {\rm vec}(\Hm)$. The scheme consists of approximating
	\begin{equation}\label{eq:H_tilde_expression}
		\Hm \approx \Am \Wm \Bm^\transp = \sum_{j=1}^{G_a} \sum_{i=1}^{G_d} w_{i,j} \av (\thetav_j) \bv (\tau_i)^\transp,
	\end{equation}
for fixed codebooks $\Am=[\av(\thetav_1),\ldots,\av(\thetav_{G_a})]$ and $\Bm = [\bv(\tau_1),\ldots,\bv(\tau_{G_d})]$, where $G_a$ is proportional to the number of antennas (typically $G_a =2M$) and $G_d$ is proportional to the number of subcarriers (typically $G_d=2N$). The columns of $\Am$ are the array steering vectors for a discrete grid of angle-of-departures (AoDs)  and the columns $\bv(\tau_i)$ of $\Bm$ form a Fourier basis for a discrete grid of delays.
The main idea of the scheme is that the channel is ``sparse'' in the angle-delay domain. Therefore, the matrix $\Wm$ contains only a few dominant coefficients. Hence, the scheme sends the quantization of such dominant coefficients and the indices of the corresponding codebook vectors as a representation of the channel. Reformulating \eqref{eq:H_tilde_expression} we have
\begin{equation}\label{eq:h_express}
	\begin{aligned}
		\hv & \approx \vec{(\Am \Wm \Bm^\transp)} = (\Bm \otimes \Am) \wv,
	\end{aligned}
\end{equation} 
where $\wv = \vec{(\Wm)}$ is the sparse vector containing the angle-delay channel coefficients and $\otimes$ is the Kronecker product. From \eqref{eq:training_signal} and \eqref{eq:h_express} we can write the training measurements as
\begin{equation}
	\yv^{\rm tr} \approx \wv^\herm (\Bm \otimes \Am)^\herm  \Xm^{\rm tr} + \zv^{\rm tr} = \wv^\herm \Phim^\herm +\zv^{\rm tr} ,
\end{equation}	
where we have defined $\Phim^\herm \triangleq (\Bm \otimes \Am)^\herm  \Xm^{\rm tr}$ as the \textit{sensing matrix}. The idea is to quantize and feedback the $s_0$ largest coefficients in $\wv$. But in order to send the dominant coefficients, first an estimate of $\wv$ must be obtained at the UE from the DL training 
$\yv^{\rm tr}$ in \eqref{eq:training_signal}. Interestingly, the 3GPP standard does not specify how such estimate should be obtained. 
Typical Least-Squares estimation in the form (e.g., see \cite{marzetta2010noncooperative})
\begin{equation} 
	\widehat{\wv} =  (\Phim^\herm \Phim)^{-1} \Phim^\herm ( \yv^{\rm tr} )^\herm  \label{LS}
\end{equation}
requires that the $MN \times MN$ matrix $\Phim^\herm \Phim$ is invertible, which in turns requires
$\beta_{\rm tr} \geq MN$, which is completely impractical in massive MIMO. Heuristic approaches consist of limiting the estimation only on the pilot subcarriers in $\Nc_p$ (typically one per resource block) and then interpolating over the subcarriers (e.g., see \cite{shirani2009channel}). In this case, the invertibility conditions yields $\beta_{\rm tr} \geq M N_p$, which yields
$T_p \geq M$ pilot symbols in time per pilot subcarrier. For a typical resource block of about 200 symbols and number of antennas ranging from 32 to 128 it is clear that the DL pilot overhead is still too large. For this reason, several works have focused on 
exploiting the inherent sparsity of $\wv$ to estimate it at the UE when $\beta_{\rm tr}$ is smaller than $MN$. These approaches use 
some form of CS, and are referred to in general as ``compressed DL pilot'' schemes (e.g., see \cite{bajwa2008compressed,zhang2018distributed,shen2016compressed}). Now, there exist plenty of sparse recovery methods in the CS literature to estimate $\wv$ from the noisy measurements $\yv^{\rm tr}$. Among these, we consider 
the well-known orthogonal matching pursuit (OMP) \cite{tropp2007signal} as a representative of sparse recovery methods, first because it is shown to be highly successful in a variety of settings and second because it is much less computationally complex in comparison to the convex optimization-based alternatives such as \cite{chen2001atomic}. We omit the details of OMP to save space and refer the reader to \cite{tropp2007signal}.

To perform OMP, the user is assumed to know the sensing matrix $\Phim$ (composed of the training matrix and the codebooks) and the sparsity level $s_0$. The estimator outputs an estimate $\widehat{\wv}$ containing $s_0$ non-zero coefficients. Following \cite{3GPPsource1}, scalar quantization is performed independently on the normalized 
amplitude and phase of each coefficient. The amplitudes are first normalized to the largest one and then each quantized with a codebook of $2^b$ uniformly spaced quantization levels in the interval $[0,1]$ as $\Qc_{\rm amp} = \{ 2^{-b}\ell\, :\, \ell=1,\ldots, 2^b \}$, for some positive integer $b$. The phase is quantized with a codebook of $2^b$ uniformly spaced quantization levels in the interval $[0,2\pi]$ as $\Qc_{\rm phase} = \{ 2\pi \times 2^{-b}\ell\, :\, \ell=1,\ldots, 2^b \}$. Therefore $2b$ bits are spent to quantize a single coefficient. Finally, the quantization bits and the indices of the codebook columns whose corresponding coefficients are non-zero (known as the support set) are transmitted to the BS.

The resulting  total number of feedback bits in this case is given by
\begin{equation}\label{eq:cs_bits}
	B_{\rm cs} = 2b (s_0-1) + \lceil \log s_0 \rceil + s_0 \lceil \log (G_a G_d)\rceil.  
\end{equation}
Here the first, second and third terms represents the number of bits spent on quantizing amplitudes, the number of bits spent to report the number of non-zeros and the number of bits necessary to encode the support set, respectively. From this, we can compute the feedback dimension necessary with the OMP method and scalar 
quantization as $\beta_{\rm fb} = B_{\rm cs}/C^{\rm ul}$.

The channel estimation error at the BS for CS-based feedback does \textit{not} go to zero even in high SNR, because of the approximation error inherent in \eqref{eq:h_express}, which is independent of SNR. The point is that, except in rare cases, the channel $\hv$ cannot be exactly described by a linear combination of $s_0$ columns of $\Bm\otimes \Am$, since the multipath angle-delay parameters do not lie on the assumed discrete grid but are rather arbitrarily distributed over the continuum. This mismatch between the exact ``off-grid" sparse representation of a vector and its on-grid sparse approximation is well-known in the CS literature (see, e.g. \cite{chi2011sensitivity}). The mismatch results in an error in sparse estimation of the channel, which does not vanish with SNR. Therefore, not only OMP, but any CS-based feedback method based on approximation of the channel with $s_0$ mismatched angle-delay vectors (with $s_0 \le MN$) results in a QSE of $\alpha_{\rm cs}=0$ for all $\beta_{\rm tr}< MN$ and all $\beta_{\rm fb}$.\footnote{If $\beta_{\rm tr}\ge MN$, the UE can simply 
compute the  Least-Squares estimate \eqref{LS} with an error that goes to zero with $\snrdl\to \infty$.} 
However, as we shall see in the next section, the {\em 3GPP-inspired} scheme can achieve a DL spectral efficiency in line with the other schemes when the channel is sufficiently sparse.

\section{Numerical Results}\label{sec:numerical_results}
In this section we present simulation results that confirm our theoretical findings and compare the studied feedback strategies for two different channel models. 
\begin{enumerate}
	\item \textbf{Synthetic Multipath Channel Model:} The synthetic multipath wideband channel model for an arbitrary user is given by 
	\[ \hv = \sum_{\ell=1}^L c_\ell \bv (\tau_\ell)\otimes \av (\theta_{\ell}), \]
	where $\av (\theta_{\ell})$ is the ULA steering vector at angle $\theta_{\ell}$ and $\bv (\tau_\ell)$ contains the phase rotations corresponding to the path delay $\tau_\ell$ with elements $[\bv (\tau_{\ell})]_n = e^{-j2\pi n f_s \tau_{\ell}}$. This expression is similar to \eqref{eq:h_express}, except that we assume each angle $\theta_{\ell}$ to be generated uniformly at random over $[-\pi/2,\pi/d]$ and each delay $\tau_\ell$ to be generated at random over $[0,\tau_{\max}]$. In addition, each path gain $\{c_\ell\}_{\ell}$ is distributed as $\cg (0,1)$. The number of paths $L$ is the same for all UEs. This results in a correlated Gaussian channel with mean $\muv = 0$ (which is intended by construction for simplicity) and covariance $\Sigmam^h=\sum_{\ell=1}^L (\bv (\tau_\ell)\bv (\tau_\ell)^\herm )\otimes (\av (\tau_\ell) \av (\tau_\ell)^\herm)$. This multipath channel model is considered extensively in the literature \cite{sayeed2003virtual,bajwa2008compressed,wunder2018hierarchical,haghighatshoar2017massive} and ensures a correlated Gaussian distribution for the channel, 
	which is a prerequisite for our theoretical results to hold.
	\item \textbf{CDL Channel Model:} The clustered delay line (CDL) channel types are proposed by 3GPP as standard link-level MIMO fading channel models \cite{gpptr38901,3gpp2020study}. In these models the channel is  the contribution of a number of multipath ``clusters", parametrized by their average gain power, AoDs and delays.
	In order to generate CDL model, we use the MATLAB system object \texttt{nrCDLChannel} \cite{MATLAB:2020}. For each user, we generate $r$ clusters with angle and delay parameters chosen uniformly at random and independently across users.
	The system object produces random realizations of the fading channel, which we normalize such that $\Vert \hv \Vert^2=MN$. We compute an approximation of the channel covariance by computing the sample covariance matrix over 2000 realizations.\footnote{Note that this number of samples is sufficient for the sample covariance matrix to reasonably converge to the true covariance, given the small covariance rank in these simulations.} Note that in this model the channel does not follow a Gaussian distribution, except perhaps in an approximate sense due to the summation of many random multipath gains as a consequence of the Central Limit Theorem. However, the simulations will show that the ranking of the discussed feedback strategies and the theoretical results of this work not only hold for correlated Gaussian channels but also empirically carry over to the more realistic CDL model.
\end{enumerate}

\subsection{Physical Channel and OFDM Parameters}

In all simulations we consider a uniform linear array (ULA) with $M=32$ antennas at the BS communicating with $K=6$ UEs over a total of $N=32$ OFDM subcarriers. This results in a wideband channel of dimension $MN=1024$. We consider an OFDM subcarrier spacing of $f_s = 30$ kHz, with the useful symbol duration equal to $T_u = 1/f_s\approx 33$ $\mu$s. We assume the maximum channel delay spread to be equal to $\tau_{\max} = 7$ $\mu$s and we take the duration of the OFDM cyclic prefix  to be equal to the maximum delay spread for simplicity, which results to an OFDM symbol duration of $T_s = T_u + \tau_{\max} = 40 $ $\mu$s. 
The corresponding coherence bandwidth is given by $B_c = 1/\tau_{\max} \approx 143 $ kHz, over which the channel has relatively small variation, while the total signal bandwidth is equal to $BW = Nf_s\approx 1$ MHz. The channel is assumed to be constant over a coherence time of $T_c = 1$ ms, corresponding to $T = T_c / T_s =25$ OFDM symbols in time. 
Therefore the wideband channel $\hv$ is an $MN$-dimensional vector process that is constant over a temporal frame of $T$ OFDM symbols and is independently generated over distinct frames according to the same statistics. The total time-frequency resource dimension of $\beta = TN = 25\times 32 = 800$ consists of $\beta_{\rm tr}$ dimensions dedicated to channel training and $\beta - \beta_{\rm tr}$ dimensions for data transmission. The training dimensions are given by considering $N_p$ pilot subcarriers and sending a pilot sequence of length $T_p$ generated according to \eqref{eq:isotropic_pilots} over each of them such that $N_p T_p = \beta_{\rm tr}$. In all simulations, we consider the SNR in UL to be 10 dBs less than the SNR in DL, i.e. $\snrul = \kappa \snrdl$ with $\kappa = 0.1$. In the experiments, all feedback methods use the same number of feedback dimensions to enable a fair comparison.  

\subsection{Performance Metrics}

We study the CSI estimation MSE and the multiuser system total spectral efficiency (sum-rate) with ZF precoding for all the feedback strategies. To compute these, we first fix a channel distribution according to one of the models described above and generate a set of random, wideband channel realizations based on that distribution. Then we generate a single random training matrix according to the design explained in Section \ref{sec:ch_training}. We then produce the noisy training measurements, compute the feedback according to each of the schemes and estimate channel accordingly, for each random channel realization. As a metric of estimation error, we consider the average, normalized MSE defined as 
\begin{equation}\label{eq:normalized_MSE}
	\begin{aligned}
		\text{MSE}_{\text{avg}}  = \frac{1}{MN} \bE[\Vert \hv- \widehat{\hv}  \Vert^2],
	\end{aligned}
\end{equation}
where the expected value is computed empirically by averaging over 
the randomly generated training matrices, channel distributions and random channels realizations. To compare achievable sum-rates in DL, we consider ZF precoding where the transmit data vector over subcarrier $n$ is given by $\dv [n] = \sum_{k=1}^K \sqrt{P_k} s_k [n] \vv_k [n]$, where $P_k = \snrdl /K$ is the (uniform) transmission power per user, $s_k [n]\in \bC$ is the data symbol intended for UE $k$ such that $\bE[|s_k [n]|^2]\le 1$, and $\{\vv_k [n]\}_{k=1}^K$ are the precoding vectors, given by the column-normalized pseudo-inverse of the estimated channel matrix.
Defining the variables $g_{k,k'}[n] = \sqrt{P_{k'}} (\hv_{k}[n]^\herm \vv_{k'}[n])$, we can write the achievable ergodic rate for user $k$ at subcarrier $n$ as \cite{caire2018ergodic}
\begin{equation}
	R\, [k,n] =\bE \left[\log \left(1+ \frac{|g_{k,k}[n]|^2}{1 + \sum_{k'\neq k} |g_{k,k'}[n]|^2}\right) \right].
\end{equation}
We consider the average sum-rate over all subcarriers, which is defined as
\begin{equation}\label{eq:ach_rate}
	R_{\text{avg}} = \frac{T}{\beta} \sum_{n\in \Nc_p^c,k} R[k,n] + \frac{T-T_p}{\beta} \sum_{n\in \Nc_p,k} R[k,n] ~~ [\text{bits}/\text{s}/\text{Hz}],
\end{equation}
where $\Nc_p^c$ is the complement of the pilot subcarrier index set. Note that this weighted averaging is necessary to take into account the fact that over the pilot subcarriers, data is transmitted over $T-T_p$ out of $T$ symbols, whereas on other subcarriers all $T$ symbols are used to send data. 

\begin{figure*}
	\centering
	\begin{subfigure}{.45\textwidth}
		\centering
		\includegraphics[ scale=0.6]{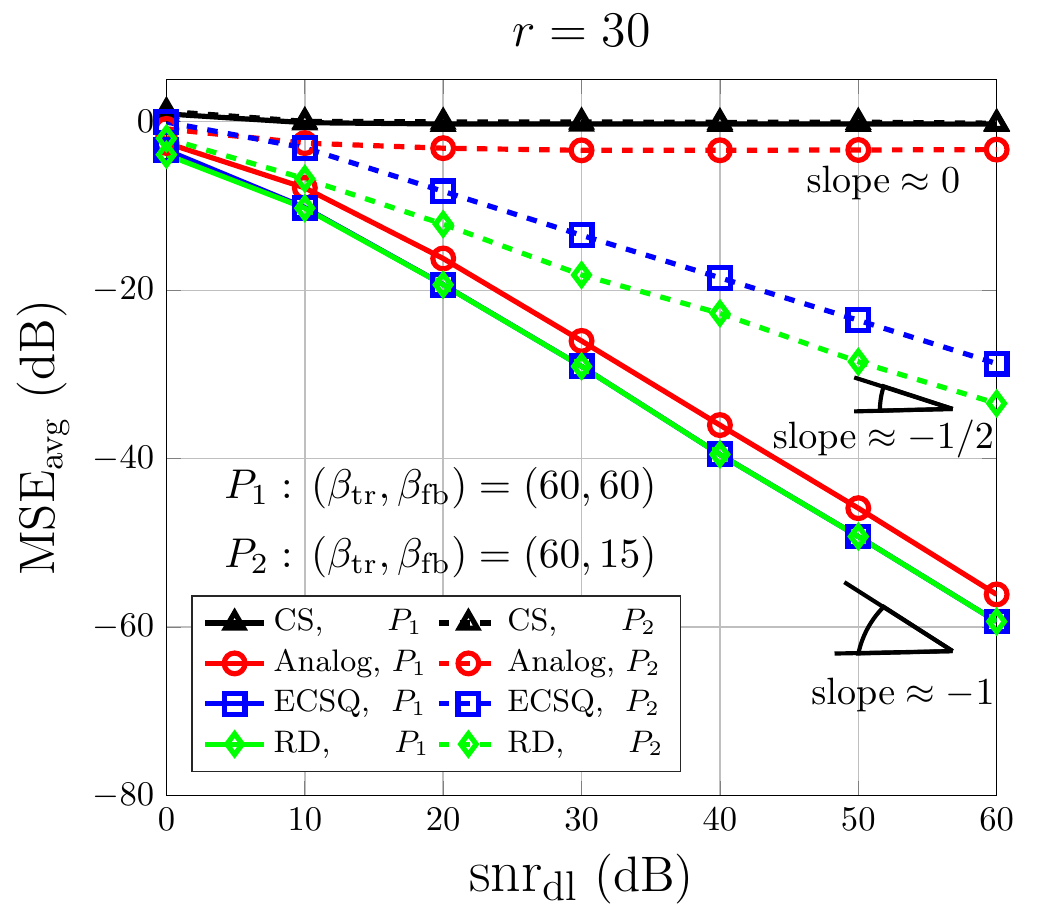}
		\caption{synthetic multipath channel model}
		\label{fig:MSE_vs_SNR_r_30}
	\end{subfigure}%
	\begin{subfigure}{.45\textwidth}
		\centering
		\includegraphics[scale=0.6]{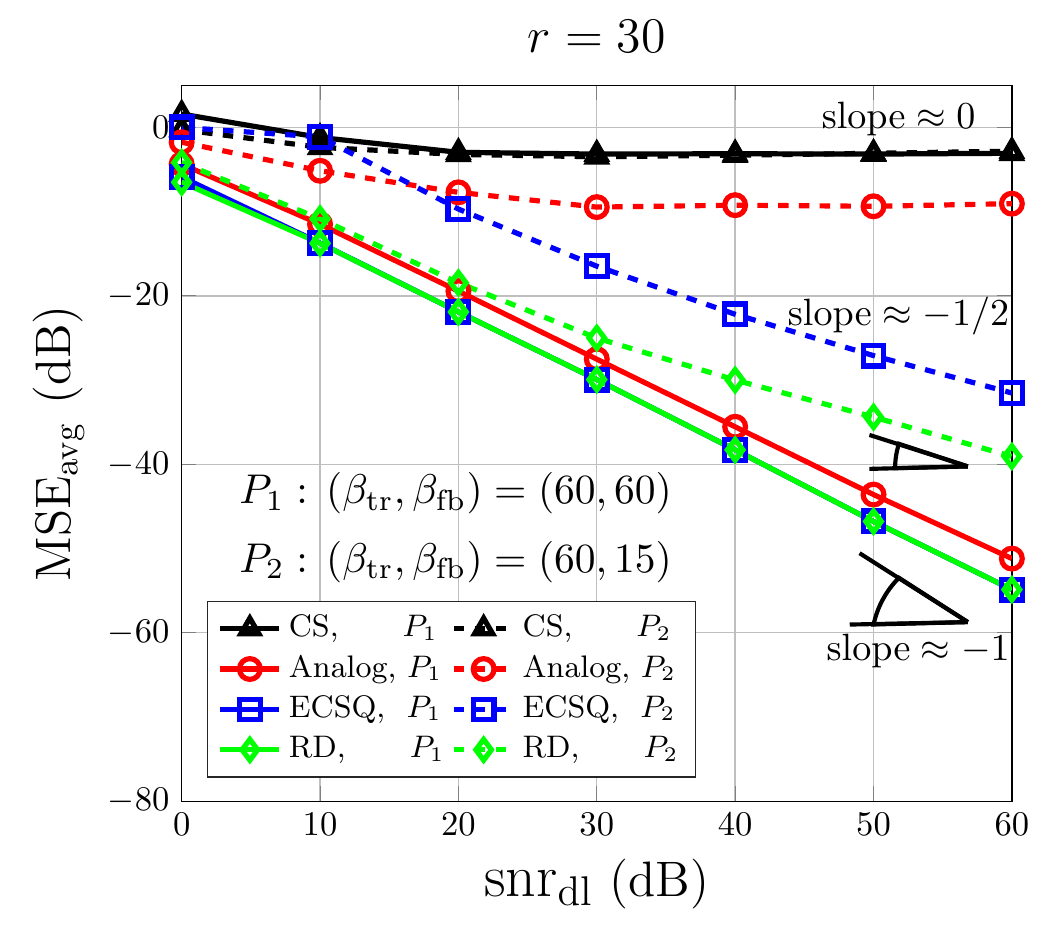}
		\caption{CDL channel model}
		\label{fig:MSE_vs_SNR_r_30_CDL}
	\end{subfigure}
	\caption{\small Comparison of channel estimation average normalized MSE vs. SNR for synthetic multipath and CDL channel models, where $r=30$.}
	\label{fig:MSE_vs_SNR}
\end{figure*}

\subsection{Experiments}
In the first set of experiments, we plot the the average normalized MSE (in dBs) vs. the DL SNR for a channel covariance rank for synthetic multipath and CDL channel models in Figs. \ref{fig:MSE_vs_SNR_r_30} and \ref{fig:MSE_vs_SNR_r_30_CDL}, respectively. Here the channel covariance rank (equivalent to the number of paths) is set to $r=30$ for all users. We compare four feedback strategies: the rate-distortion (RD) feedback, which is theoretically optimal, ECSQ feedback, analog feedback and the 3GPP-inspired feedback scheme (denoted here by CS). We consider two different points in the $\beta_{\rm tr}-\beta_{\rm fb}$ plane to illustrate the effect of different training and feedback dimensions on the CSIT estimation error. The solid lines in both figures correspond to the training and feedback dimension pair $(\beta_{\rm tr},\beta_{\rm fb}) = (60,60)$ which belongs to the region $\Rc_1$ (see Fig. \ref{fig:quality_scaling_exponent}). The dashed lines on the other hand correspond to $(\beta_{\rm tr},\beta_{\rm fb}) = (60,15)$ which belongs to the region $\Rc_2$. When $(\beta_{\rm tr},\beta_{\rm fb}) = (60,60)$, from \eqref{eq:decay_exponent_ub} and \eqref{eq:af_decay_exponent} we expect that when RD, 
the ECSQ and analog feedback methods achieve a QSE of $\alpha = 1$, since $\beta_{\rm tr} = \beta_{\rm fb} > r$.  In contrast, when $(\beta_{\rm tr},\beta_{\rm fb}) = (60,15)$,  RD and ECSQ achieve a QSE of $\beta_{\rm fb}/r = 15/30 = 1/2$, while analog feedback yields a QSE of 0. These predictions are confirmed by the average MSE vs. SNR curves of Figs. of both \ref{fig:MSE_vs_SNR_r_30} and \ref{fig:MSE_vs_SNR_r_30_CDL}, where the slope of the curves in high SNR is equivalent to $-\alpha$. Notice that in both figures, the 3GPP-inspired scheme yields a high MSE and one that remains constant in high SNR, confirming 
the fact that $\alpha_{\rm cs} = 0$. The error decay exponent is zero, because of the explained mismatch between the true space-delay sparsity domain and the discretized domain assumed by the estimator. Besides, the error is high due to the fact that the user receives $\beta_{\rm tr}=60$ pilots, and estimates a channel with sparsity level $r=30$. For this sparsity level, the CS sparse recovery method needs far more measurements to achieve a low estimation error. 
\begin{figure*}
	\centering
	\begin{subfigure}{.45\textwidth}
		\centering
		\includegraphics[ scale=0.6]{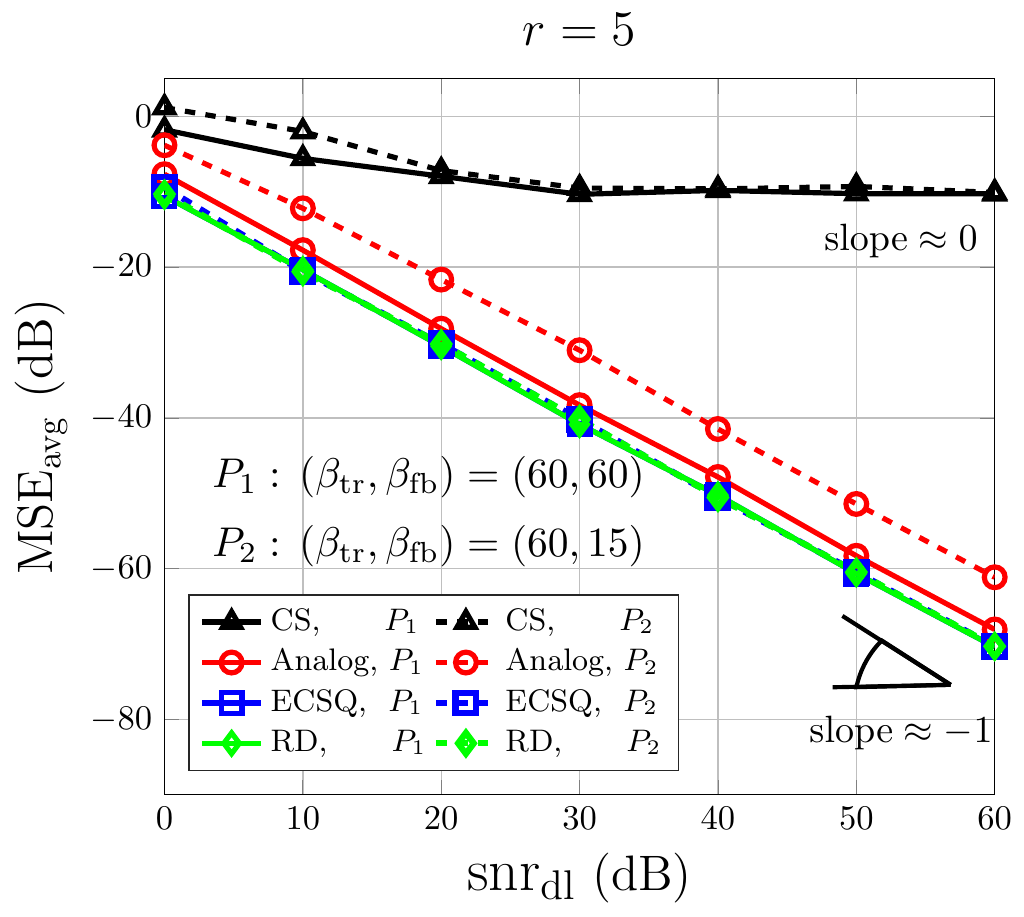}
		\caption{synthetic multipath channel model}
		\label{fig:MSE_vs_5_r_30}
	\end{subfigure}%
	\begin{subfigure}{.45\textwidth}
		\centering
		\includegraphics[scale=0.6]{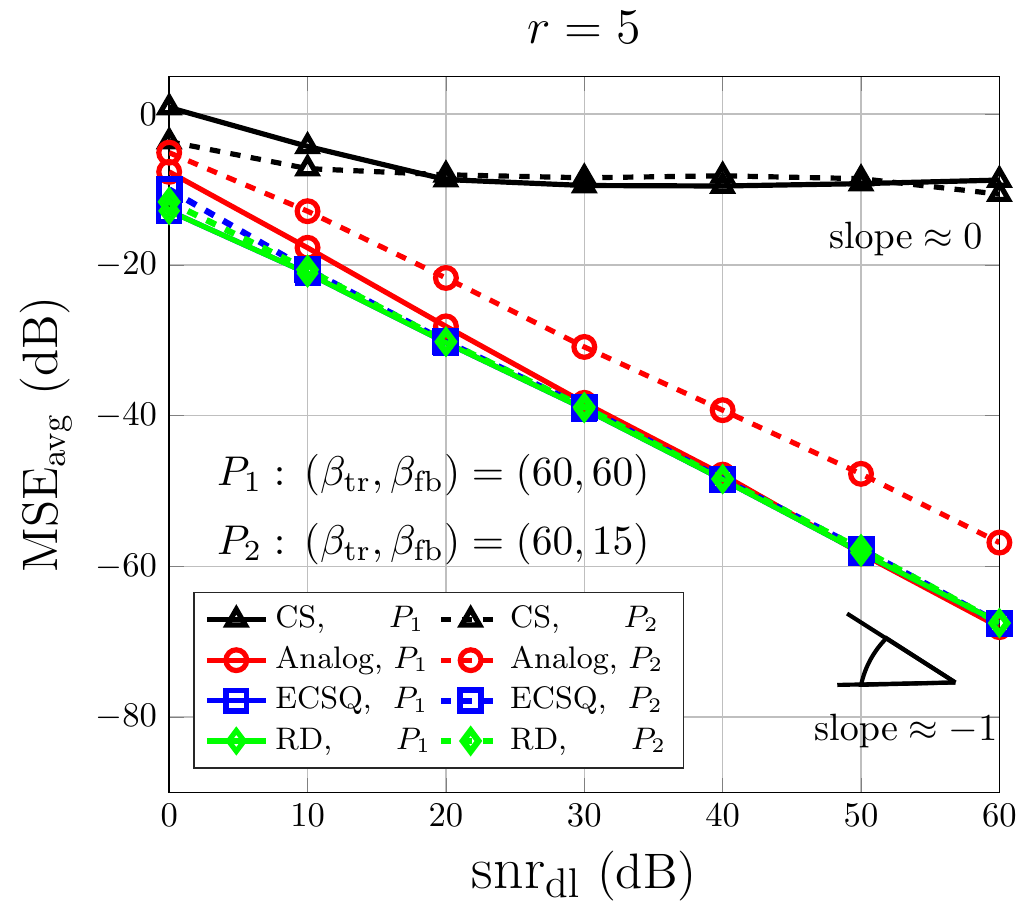}
		\caption{CDL channel model}
		\label{fig:MSE_vs_SNR_r_5_CDL}
	\end{subfigure}
	\caption{\small Comparison of channel estimation average normalized MSE vs. SNR for synthetic multipath and CDL channel models, where $r=5$.}
	\label{fig:MSE_vs_SNR_r_5}
\end{figure*}

We highlight the latter point by repeating the experiment for a sparser channel with $r=5$ multipath components. The results are given in Figs. \ref{fig:MSE_vs_SNR_r_5} and \ref{fig:MSE_vs_SNR_r_5_CDL}, where we observe that the estimation error of the 3GPP-inspired feedback has significantly decreased in both channel models reaching a normalized MSE of around -10 dB. Since in this case the channel is much more sparse, the CS estimator at the user can estimate it with much less error. Furthermore, in both figures the estimation error with RD, ECSQ and analog feedbacks decrease with a slope of -1 in large SNR, equivalent to a QSE of 1. The reason is that when $r=5$, for both points $(\beta_{\rm tr},\beta_{\rm fb})\in \{(60,60),(60,15)\}$, we have $\beta_{\rm tr}>r$ and $\beta_{\rm fb}>r$ in which case the above mentioned methods achieve a QSE of 1 as predicted by  the theoretical results.

\begin{figure*}
	\centering
	\begin{subfigure}{.45\textwidth}
		\centering
		\includegraphics[scale=0.6]{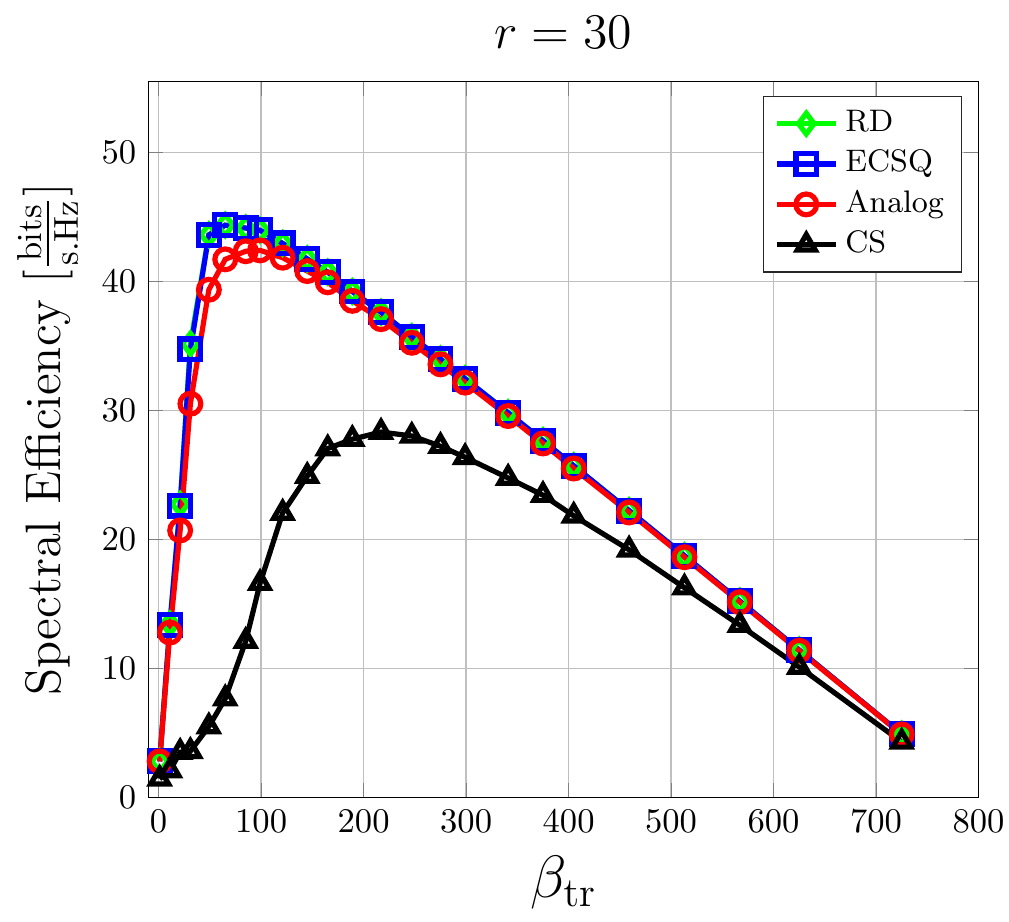}
		\caption{synthetic multipath channel model}
		\label{fig:Rate_vs_Beta_r_30}
	\end{subfigure}%
	\begin{subfigure}{.45\textwidth}
		\centering
	\includegraphics[scale=0.6]{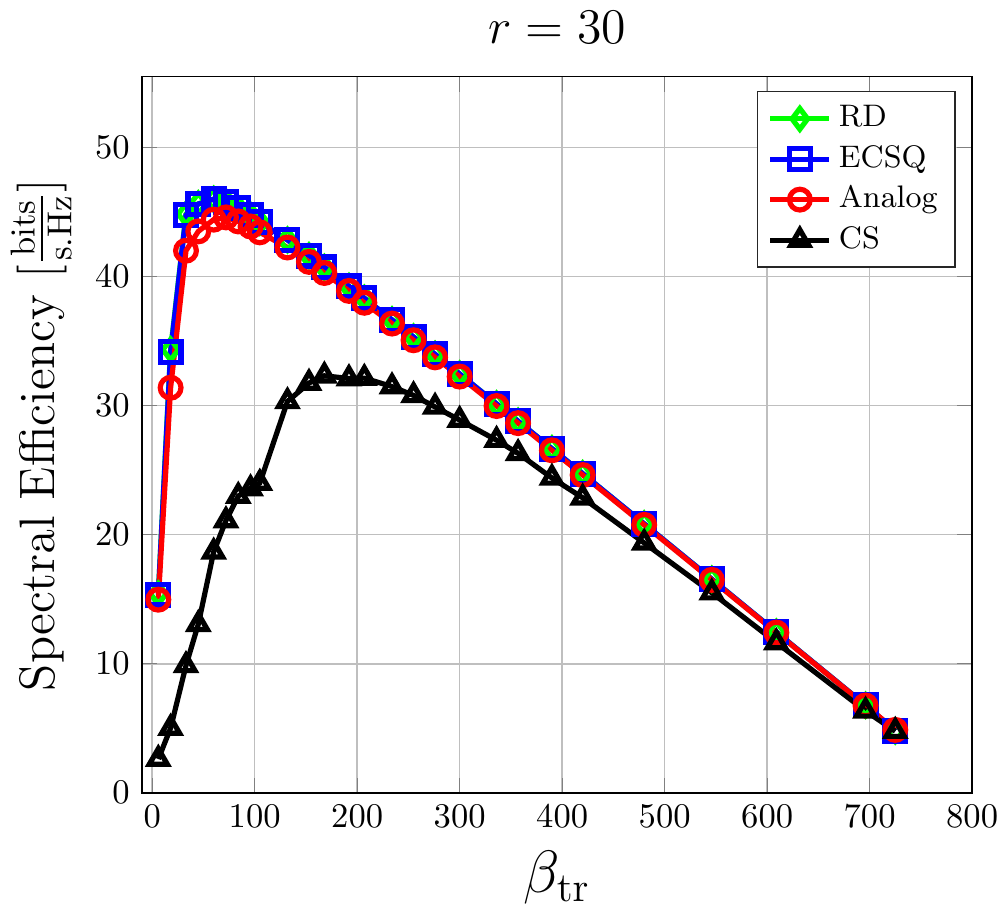}
	\caption{CDL channel model}
	\label{fig:Rate_vs_Beta_r_30_CDL}
	\end{subfigure}
%
	\caption{\small Comparison of channel estimation average normalized MSE vs SNR for synthetic multipath and CDL channel models. Here we have $r=30$ and the SNR in DL is set to $\snrdl = 20$ dB.}.
	\label{fig:Rate_vs_Beta}
\end{figure*}

In the next experiment, we compare the DL sum-rate in \eqref{eq:ach_rate} achieved by the four feedback methods. Figs. \ref{fig:Rate_vs_Beta_r_30} and \ref{fig:Rate_vs_Beta_r_30_CDL} illustrate the results for the synthetic multipath and CDL channel models. Here we have considered training and feedback dimensions to be equal at each point of the curves, so that the horizontal axis represents both training and feedback dimensions. The downlink SNR is set to $\snrdl = 20$ dB. Note that for both very small $(\beta_{\rm tr}\to 0)$ and very large $(\beta_{\rm tr}\to TN=800)$ training dimensions, the sum-rate is close to zero, since in the first case the channel estimation error is very high due to insufficient training, resulting in high interference and in the second case most signal dimensions are spent on training and very little on data transmission. From \eqref{eq:decay_exponent_ub}, \eqref{eq:af_decay_exponent} and Proposition \ref{prop:ecsq_alpha} we know that the RD, ECSQ and analog feedback schemes have the same QSE whenever $\beta_{\rm tr}=\beta_{\rm fb}$ and therefore the same DoF. From this, we expect that the sum-rate achieved by these three schemes to be very close to each other in high SNR. The curves of Figs. \ref{fig:Rate_vs_Beta_r_30} and \ref{fig:Rate_vs_Beta_r_30_CDL} show that even for a moderate SNR of $20$ dBs, the three methods achieve very similar downlink sum-rates. The implication is that the one-shot ECSQ and analog feedback schemes are sufficiently close to the optimal RD feedback in terms of the achievable downlink sum-rate. Furthermore, as we can see the 3GPP-inspired feedback method yield a significantly lower sum-rate. The reason is that, channel estimation via compressed sensing requires a training dimension that is relatively much larger than the other methods. Therefore it can be competitive, only if the channel is very sparse. Otherwise, the larger training dimension needed for an accurate channel estimation via compressed sensing results in a significant penalty in terms of the DL training-data transmission trade-off which is exemplified in Figs. \ref{fig:Rate_vs_Beta_r_30} and \ref{fig:Rate_vs_Beta_r_30_CDL}.

\section{Conclusion}
We provided optimal rate-distortion bounds for the problem of CSIT feedback in wideband massive MIMO systems, demonstrating an upper-bound on the MSE decay rate in high SNR for any feedback scheme when the BS broadcasts random Gaussian training pilots to the users. We then discussed three one-shot feedback methods, each assuming various levels of channel statistics knowledge either at the BS or the users. In particular, we showed that the low-complexity analog feedback yields a near-optimal high-SNR channel estimation error decay with no channel statistics knowledge at the user side and no quantization quantization and channel coding. We also studied a 3GPP-inspired feedback method based on compressed sensing estimation of the channel sparse coefficients at the user side, where we showed that the method entails a residual CSIT estimation error even in high SNR, due to a mismatch between the true space-delay domain of sparsity and the one assumed by the estimator. However, the method can yield decent results for sufficiently sparse channels. The findings were supported by numerical simulations comparing the normalized channel estimation MSE and achievable DL spectral efficiency with various feedback methods.

\appendices


\section{Proof of Lemma \ref{lem:rate_distortion}}\label{app:RD_lemma_proof}

We start by stating a few standard results regarding the (remote) rate-distortion function. It is well-known that the rate-distortion function of an i.i.d source represented by the random variable $\uv$ with distribution $p_{\uv}$ can be computed as (see \cite{thomas2006elements} Theorem 10.2.1)
\begin{equation}\label{eq:rate_distortion}
	R_{\uv} (D) = \min_{p_{\widebar{\uv}|\uv}: d(\widebar{\uv},\uv)\le D} I(\widebar{\uv};\uv),
\end{equation}
where $\widebar{\uv}$ is the quantization of $\uv$, $I(\widebar{\uv};\uv)$ is the mutual information between $\uv$ and $\widebar{\uv}$, $d(\widebar{\uv},\uv)$ is the distortion between $\uv$ and $\widebar{\uv}$ (see \eqref{eq:err_def}) and the minimum is taken over all conditional distributions for which the joint distribution $p_{\uv,\widebar{\uv}} $ satisfies the distortion constraint. It is also known that the remote rate distortion function of a source represented by the random variable $\hv$, and encoded given its observations denoted by the random variable $\yv^{\rm tr}$ is given by (see \cite{berger1971rate})
\begin{equation}\label{eq:remote_rate_distortion}
	R_{\hv}^r (D) = \min_{p_{\widebar{\hb}|\yb^{\rm tr}} : d(\widebar{\hv},\hv)\le D}~I(\widebar{\hv};\yv^{\rm tr})
\end{equation}
where $\widebar{\hv}$ is the quantization of $\hv$, $I(\widebar{\hv};\yv^{\rm tr})$ is the mutual information between $\widebar{\hv}$ and $\yv^{\rm tr}$, and the minimum is taken over all conditional distributions $p_{\widebar{\hv}|\yv^{\rm tr}}$ for which the joint distribution $p_{\hv,\widebar{\hv}}= p_{\hv} p_{\yv^{\rm tr}|\hv}p_{\widebar{\hv}|\yv^{\rm tr}} $ satisfies the distortion constraint. Note that since all sources are i.i.d, we have removed realization index superscripts from the variables (hence $\hv$ instead of $\hv^{(i)}$). From the premise of the lemma, $\uv$ is the MMSE estimate of the channel given the training measurements, i.e. $\uv = \bE [\hv|\yv^{\rm tr}]$. Using the same technique employed to prove inequality (15) of \cite{eswaran2019remote} (see Appendix A in \cite{eswaran2019remote}), we can verify that the remote rate-distortion function of $\hv$ is related to the rate distortion function of $\uv$ by
\begin{equation}\label{eq:relation_1}
	R_{\hv}^r (D)  = R_{\uv}(D-D_{\rm mmse}),
\end{equation}
for $D\ge D_{\rm mmse}$, where $D_{\rm mmse} = \bE \left[\Vert \hv - \uv\Vert^2 \right]$ is the MMSE of estimating the channel at the UE. 

On the other hand, the rate-distortion function of a correlated vector Gaussian source is given by reverse water-filling over its covariance eigenvalues \cite{cover2006elements}. If we denote the eigenvalues of $\Sigmam^u$ by $\{ \lambda_{\ell}^u \}_{\ell=1}^{MN}$, then we have $R_{\uv} (D) = \sum_{\ell=1}^{MN}\left[ \log \frac{\lambda_{\ell}^u}{\gamma} \right]_+,$ where $\gamma $ is chosen such that $\sum_{\ell=1}^{MN}\min \{ \gamma, \lambda_{\ell}^u \}=D$. Plugging this in \eqref{eq:relation_1} we get
\begin{equation}
	R_{\hv}^r (D) = \sum_{\ell=1}^{MN}\left[ \log \frac{\lambda_{ \ell}^u}{\gamma} \right]_+,
\end{equation}
where $\gamma$ is chosen such that $\sum_{\ell=1}^{MN}\min \{ \gamma, \lambda_{u, \ell} \} = D-D_{\rm mmse}.$ The proof is complete. \hfill $\blacksquare$

\section{Proof of Theorem \ref{thm:rate_asymp}}\label{app:rate_asymp_thm_proof}

We divide the proof to two parts. First, we show that if $\beta_{\rm tr} < r$, then the achievable error behaves as $\Theta (1)$ for all realizations of $\Xm^{\rm tr}$. Second, we show that if $\beta_{\rm tr} \ge r$, then an error decaying as $\Theta (\snrdl^{-\min(\beta_{\rm fb}/r,1)})$ is achievable with probability 1 
over the realizations of $\Xm^{\rm tr}$.\\

\noindent\textbf{Part I.} To prove part I, we first bound the minimum mean squared error (MMSE) of estimating the channel given the training measurements at the user, namely the variable $D_{\rm mmse}$. From $\uv = \bE [\hv | \yv^{\rm tr}]$ we have
\begin{equation}\label{eq:Dmmse}
	\begin{aligned}
		D_{\rm mmse} = \bE[\Vert \hv - \uv \Vert^2] &= \trace \left(\bE [\hv\hv^\herm ] - \bE[\hv \yv^{\rm tr} ]\bE[\yv^{{\rm tr}\, \herm} \yv^{\rm tr} ]^{-1} \bE[\hv \yv^{\rm tr} ]^\herm  \right)\\
		& = \trace \left( \Sigmam^h - \Sigmam^h \Xm^{\rm tr} \left(\Xm^{{\rm tr}\, \herm}\Sigmam^h \Xm^{\rm tr} + \mathbf{I} \right)^{-1}\Xm^{{\rm tr}\, \herm} \Sigmam^h   \right)
	\end{aligned}
\end{equation}
Let us define the eigendecomposition of $\Sigmam^h$ as $\Sigmam^h = \Um_h \Lambdam_h \Um_h^\herm$, where $\Um_h\in \bC^{MN\times r} $ is a tall unitary matrix and $\Lambdam_h = {\rm diag}(\lambdav) \in \bR^{r\times r}$ is a diagonal matrix of positive eigenvalues represented by the vector $\lambdav=[\lambda_1,\ldots,\lambda_r]^\transp$ where we assume $\lambda_{1}\ge \ldots\ge \lambda_{r}$ without loss of generality. Using this decomposition and applying the Sherman-Morrison-Woodbury matrix identity to $\left( \Xm^{{\rm tr}\, \herm }\Sigmam^h \Xm^{\rm tr} + \mathbf{I} \right)^{-1}$, we have
\begin{equation}
	\begin{aligned}
		\Sigmam^h \Xm^{\rm tr} \left(\Xm^{{\rm tr}\, \herm}\Sigmam^h \Xm^{\rm tr} + \mathbf{I} \right)^{-1}\Xm^{{\rm tr}\, \herm} \Sigmam^h=\Um_h \Lambdam_h^{1/2}\Gm \Lambdam_h^{1/2}\Um_h^\herm 
	 - \Um_h\Lambdam_h^{1/2}\Gm \left( \mathbf{I} + \Gm \right)^{-1}\Gm\Lambdam_h^{1/2}\Um_h^\herm,
	\end{aligned}
\end{equation}
where we have defined 
\begin{equation}\label{eq:G_def_0}
	\begin{aligned}
		\Gm \triangleq \Lambdam_h^{1/2}  \Um_h^\herm \Xm^{\rm tr} \Xm^{{\rm tr}\, \herm }\Um_h  \Lambdam_h^{1/2}. 
	\end{aligned}
\end{equation}
Plugging this into \eqref{eq:Dmmse} we have
\begin{equation}\label{eq:Dmmse_2}
	D_{\rm mmse} = \trace \left( \Lambdam_h \left(\mathbf{I} -  \Gm + \Gm ( \mathbf{I}+\Gm)^{-1} \Gm \right)\right). 
\end{equation}
Using a simple trace inequality, one can show that 
\begin{equation}\label{eq:mmse_bound}
	\lambda_{r} \, g(\snrdl) \le D_{\rm mmse}\le \lambda_{1} \, g(\snrdl),
\end{equation}
where we have defined  $g(\snrdl) = \trace \left( \mathbf{I} -  \Gm + \Gm ( \mathbf{I}+\Gm)^{-1} \Gm \right) $, to explicitly denote the dependence of this term on $\snrdl$. Note that this dependence emerges from the dependence of $\Xm^{\rm tr}$ and therefore $\Gm$ on $\snrdl$.
We now show how $g(\cdot)$ behaves for large $\snrdl$.
For a given realization of the training matrix $\Xm^{\rm tr}$, denote the eigenvalues of $\Gm$ by $\mu_{i},\, i=1,\ldots, r$. It follows that
\begin{equation}\label{eq:tr_expansion_0}
	\begin{aligned}
		g(\snrdl) &= r - \sum_i \mu_i +\sum_i \frac{\mu_i^2 }{\mu_i + 1} =  r - \sum_{i=1}^r \frac{\mu_i }{\mu_i + 1}
	\end{aligned}
\end{equation} 
Also note that the training matrix can be written as $\Xm^{\rm tr} = \sqrt{\snrdl}\Xm_0^{\rm tr}$, where $\Xm_0^{\rm tr}$ is randomly generated and independent from $\snrdl$. From this and the definition \eqref{eq:G_def_0}, we have $\mu_i = \Theta (\snrdl)$ for all $\mu_i \neq 0$. Using this and \eqref{eq:tr_expansion_0}, we deduce that if $\Gm$ is full-rank ($\mu_i\neq 0$ for all $i$) then $g(\snrdl) = \Theta (\snrdl^{-1})$ and using \eqref{eq:mmse_bound}  we have $ D_{\rm mmse} = \Theta (\snrdl^{-1})$. Conversely, if $\Gm$ has at least one zero eigenvalue ($ \mu_i = 0$ for some $i$) then from \eqref{eq:tr_expansion_0} we have $ g(\snrdl) >1$ for all $\snrdl$ and from \eqref{eq:tr_bounds} we have $ D_{\rm mmse} = \Theta(1)$.

Now, the rank of $\Gm$ depends on the specific realization of $\Xm^{\rm tr}$. When $\beta_{\rm tr}\ge r$ and $\Xm^{\rm tr}$ consists of Gaussian isotropic pilot vectors, $\Gm$ is full-rank with probability one because of the following. The product $\Um_h^\herm \Xm^{{\rm tr}}$ consists of $\beta_{\rm tr} $ independent Gaussian columns, each of dimension $r$. The event that these vectors span a space of dimension less than $r$ has probability zero. Therefore, $\Um_h^\herm \Xm^{\rm tr}  \Xm^{{\rm tr}\, \herm }\Um_h$ has rank $r$ with probability one, and since $\Lambdam_h^{1/2}$ has positive diagonal elements, by definition \eqref{eq:G_def_0} $\Gm$ also has rank $r$ with probability one and $\mu_i \neq 0$ for all $i$. Conversely, if $\beta_{\rm tr}< r$, $\Gm$ has rank at most equal to $\beta_{\rm tr}$ for any realization of the training matrix, leading to $\mu_i= 0$ for some $i$. This results in $D_{\rm mmse}=\Theta (1)$. In short, we have proved $	D_{\rm mmse} = \Theta (\snrdl^{-1})$ for $\beta_{\rm tr} \ge r$ and $	D_{\rm mmse} = \Theta (1)$ for $\beta_{\rm tr}< r$ with probability one over the realizations of $\Xm^{\rm tr}$. In addition, Lemma \ref{lem:rate_distortion} states that only errors $D\ge D_{\rm mmse}$ are achievable. It follows that, if $\beta_{\rm tr}< r$, then the minimum achievable error in estimating the CSIT behaves as $\Theta (1)$. This completes the first part of the proof.

\noindent\textbf{Part II.} To prove the second part, first note that if $\beta_{\rm tr} \ge r$, then the covariance of the MMSE channel estimate $\uv$ at the UE, given as
\begin{equation}\label{eq:Sigmau}
	\Sigmam^u = \Sigmam^h \Xm^{\rm tr}\left(  \Xm^{{\rm tr}\, \herm} \Sigmam^h \Xm^{{\rm tr}} + \mathbf{I} \right)^{-1}\Xm^{{\rm tr}\, \herm} \Sigmam^h - \muv \muv^\herm
\end{equation}
has rank $r$ with probability one over the realizations of $\Xm^{\rm tr}$. Without loss of generality assume the eigenvalues of $\Sigmam^u$ to be ordered as $\lambda_{1}^u \ge \ldots \ge \lambda_{r}^u > 0  $. Next, consider the remote rate-distortion function in Lemma \ref{lem:rate_distortion}, given as
\begin{equation}
	R_{\hv}^r (D) = \sum_{\ell=1}^{MN}\left[ \log \frac{\lambda_{\ell}^{u}}{\gamma} \right]_+,~\text{for}~ D\ge D_{\rm mmse}
\end{equation}
where $\gamma$ is chosen such that $\sum_{\ell=1}^{MN}\min \{ \gamma, \lambda_{\ell}^{u} \} = D-D_{\rm mmse}.$ Consider an interval of error values $D$ for which $D - D_{\rm mmse}<\varepsilon$ for some $\varepsilon$. For sufficiently small $\varepsilon$ we have $\gamma = (D-D_{\rm mmse})/r$ and the remote rate-distortion function is given by 
\begin{equation}\label{eq:R_formula}
	R_{\hv}^r (D) =f(r) - r \log (D-D_{\rm mmse}), ~~\text{for } D-D_{\rm mmse}<\varepsilon,
\end{equation}
where $f(r)= \sum_{\ell = 1}^{r} \log \lambda_{ \ell}^u  + r\log r$ is a value independent of $D$. This implies that for all rates $R>R_{\varepsilon}\triangleq f(r)-r\log \varepsilon$ we can write the remote \textit{distortion-rate} function as 
\begin{equation}\label{eq:dist_rate_func}
	D_{\hv}^r (R) = 2^{\frac{f(r)-R}{r}} + D_{\rm mmse}.
\end{equation}
Now let $R=\beta_{\rm fb} C^{\rm ul}$. Replacing the MIMO-MAC capacity formula $C^{\rm ul} = \log (1+ M\kappa \snrdl)$, we notice that there exists some $\snrdl^\varepsilon$ such that $\beta_{\rm fb} \log (1+M\kappa \snrdl)>R_{\varepsilon}$ for all $\snrdl>\snrdl^{\varepsilon}$. Therefore we can write 
\begin{equation}\label{eq:my_eqqq}
		\scalebox{0.95}{$\log ( D_{\hv}^r (\beta_{\rm fb} C^{\rm ul})- D_{\rm mmse}) = \log r + \sum_{\ell=1}^r \log \lambda_{ \ell}^u /r- \frac{\beta_{\rm fb}}{r}\log (1+ M\kappa \snrdl),~\text{for } \snrdl > \snrdl^{\varepsilon}.$}
\end{equation}
From \eqref{eq:Sigmau} we have that the non-zero eigenvalues of $\Sigmam^u$ scale as $\Theta (1)$, i.e. $\lambda_{\ell}^u = \Theta(1),\, \ell=1,\ldots,r$. Therefore, the right-hand-side of \eqref{eq:my_eqqq} behaves as $\Theta(\log (\snrdl^{-\beta_{\rm fb}/r}))$ in $\snrdl$. It follows that for $\beta_{\rm tr}\ge r$ we have
\begin{equation}\label{eq:my_eq_2}
	\begin{aligned}
		D_{\hv}^r (\beta_{\rm fb} C^{\rm ul}) = D_{\rm mmse} +\Theta(\snrdl^{-\beta_{\rm fb}/r})
		& =\Theta(\snrdl^{-1})+\Theta(\snrdl^{-\beta_{\rm fb}/r})\\
		& =\Theta(\snrdl^{-\min(\beta_{\rm fb}/r,1)}) 
	\end{aligned}
\end{equation}
Finally, from the source-channel separation with distortion theorem, we can achieve a CSIT 
estimation error of $D > D_{\hv}^r ( \beta_{\rm fb} C^{\rm ul} )$ if and only if we use the UL channel over $\beta_{\rm fb}$ feedback dimensions (see Section \ref{sec:RD_LB}), which combined with \eqref{eq:my_eq_2} shows that when $\beta_{\rm tr}\ge r$, we can achieve an error decay of $\Theta(\snrdl^{-\min(\beta_{\rm fb}/r,1)}) $ with probability one over the realizations of $\Xm^{\rm tr}$ with a feedback dimension of $\beta_{\rm fb}$.
Combining Parts I and II of the proof, we have that rate-distortion feedback achieves a CSIT estimation error of $\bE[\Vert \hv - \widehat{\hv} \Vert^2]=\Theta(\snrdl^{-\alpha_{\rm rd}})$, where $\alpha_{\rm rd}= \min (\beta_{\rm fb}/r,1)\mathbf{1}\{\beta_{\rm tr} \geq r \}$.\hfill $\blacksquare$
\begin{remark}\label{remark:alpha_ECSQ}
	We can achieve the same QSE with ECSQ. To see this, note that from \eqref{eq:ECSQ_RD} the remote rate-distortion function with ECSQ for sufficiently small distortion values can be expressed as
	\begin{equation}\label{eq:R_formula_ECSQ}
		R_{\hv}^r (D) =\sum_{\ell = 1}^{r} \log \lambda_{ \ell}^u  + r\log r + 1.508r - r \log (D-D_{\rm mmse}).
	\end{equation}
	Comparing \eqref{eq:R_formula_ECSQ} to \eqref{eq:R_formula}, we note that the same steps leading to equations \eqref{eq:dist_rate_func}-\eqref{eq:my_eq_2} can be repeated by modifying the function $f(\cdot)$ in \eqref{eq:R_formula} to $f(r) = \sum_{\ell = 1}^{r} \log \lambda_{ \ell}^u  + r\log r + 1.508r$. Since the added term $1.508r$ does not depend on SNR, it appears as a constant in the distortion-rate function and the high-SNR error behavior with ECSQ when remains the same. 
\end{remark}

\section{Proof of Theorem \ref{thm:af_distortion_modified}}\label{app:af_thm_proof_modified}

	The channel estimation MMSE given the feedback signal in \eqref{eq:af_bs_signal} can be written as
	\begin{equation}\label{eq:af_distortion_modified}
		\bE [\Vert \hv - \widehat{\hv} \Vert^2] = \trace \left ( \Sigmam^h - \Sigmam^{h}\Xm^{\rm tr}\Psim\Rm_{y,\,{\rm af}}^{-1}\Psim^\herm \Xm^{{\rm tr}\, \herm} \Sigmam^{h} \right) , 
	\end{equation}
	where $\Sigmam^h=\bE[\hv \hv^\herm] $,
	$\bE[\hv \yv^{{\rm af}}] =  \Sigmam^{h}\Xm^{\rm tr}\Psim ,$ and 
	\begin{equation}\label{eq:y_af_cov}
		\Rm_{y,\,{\rm af}} = \Psim^\herm \Xm^{{\rm tr}\, \herm} \Sigmam^h \Xm^{{\rm tr}}\Psim +\Psim^\herm \Psim +\mathbf{I},
	\end{equation}
	Consider the eigen-decomopsition $\Sigmam^h = \Um_h \Lambdam_h \Um_h^\herm$, where $\Um_h\in \bC^{MN\times r} $ is a tall unitary matrix and $\Lambdam_h = {\rm diag}(\lambdav) \in \bR^{r\times r}$ is a diagonal matrix of positive eigenvalues represented by the vector $\lambdav=[\lambda_1,\ldots,\lambda_r]^\transp$, where we assume $\lambda_{1}\ge \ldots\ge \lambda_{r}>0$. Using this decomposition, the expression in \eqref{eq:y_af_cov}, and applying the Sherman-Morrison-Woodbury matrix identity we have
	\begin{equation}
		\begin{aligned}
			\Rm_{y,\,{\rm af}}^{-1} &= (\Psim^\herm \Psim + \mathbf{I})^{-1} 
			-(\Psim^\herm \Psim + \mathbf{I})^{-1} \Psim^\herm \Xm^{{\rm tr}\, \herm}\Um_h \Lambdam_h^{1/2}\times \\
			&\hspace{-20mm} \left( \mathbf{I} + \Lambdam_h^{1/2}\Um_h^\herm \Xm^{{\rm tr}}\Psim (\Psim^\herm \Psim + \mathbf{I})^{-1} \Psim^\herm \Xm^{{\rm tr}\, \herm }\Um_h \Lambdam_h^{1/2}    \right)^{-1}  \Lambdam_h^{1/2} \Um_h^\herm \Xm^{\rm tr} \Psim (\Psim^\herm \Psim +\mathbf{I})^{-1}.
		\end{aligned}
	\end{equation}
It follows that the second term appearing within the ${\rm Tr}(\cdot)$ in \eqref{eq:af_distortion_modified} can be written as
	\begin{equation}\label{eq:eqqq0}
		\begin{aligned}
			\Sigmam^{h}\Xm^{\rm tr}\Psim\Rm_{y,\,{\rm af}}^{-1}\Psim^\herm \Xm^{{\rm tr}\, \herm} \Sigmam^{h} &=\Um_h \Lambdam_h^{1/2}\Gm \Lambdam_h^{1/2}\Um_h^\herm 
			- \Um_h\Lambdam_h^{1/2}\Gm \left( \mathbf{I} + \Gm \right)^{-1}\Gm\Lambdam_h^{1/2}\Um_h^\herm,
		\end{aligned}
	\end{equation}
	where we define
	\begin{equation}\label{eq:G_def}
		\begin{aligned}
			\Gm &\triangleq \Lambdam_h^{1/2}\Um_h^\herm \Xm^{\rm tr}\Psim (\Psim^\herm \Psim + \mathbf{I})^{-1} \Psim^\herm \Xm^{{\rm tr}\, \herm} \Um_h\Lambdam_h^{1/2}\\
		\end{aligned}
	\end{equation}
	Plugging \eqref{eq:eqqq0} into \eqref{eq:af_distortion_modified} we have
	\begin{equation}\label{eq:distor_1}
		\bE [\Vert \hv - \widehat{\hv} \Vert^2]  = \trace \left( \Lambdam_h \left(\mathbf{I} -  \Gm + \Gm ( \mathbf{I}+\Gm)^{-1} \Gm \right)\right). 
	\end{equation}
	This formula is exactly the same as \eqref{eq:Dmmse_2} except for the definition of $\Gm$. Therefore the same trace inequality as in \eqref{eq:mmse_bound} holds here for the CSIT estimation error, i.e. we have 
	\begin{equation}\label{eq:tr_bounds}
		\lambda_{r} \, g(\snrdl) \le \bE [\Vert \hv - \widehat{\hv} \Vert^2]  \le \lambda_{1} \, g(\snrdl),
	\end{equation}
	where $g(\snrdl) = \trace \left( \mathbf{I} -  \Gm + \Gm ( \mathbf{I}+\Gm)^{-1} \Gm \right) $. We now show how $g(\cdot)$ behaves for large $\snrdl$. For a given realization of $\Xm^{\rm tr}$, denote the eigenvalues of $\Gm$ by $\mu_{i},\, i=1,\ldots, r$. We can write
	\begin{equation}\label{eq:tr_expansion}
		\begin{aligned}
			g(\snrdl) &= r - \sum_i \mu_i +\sum_i \frac{\mu_i^2 }{\mu_i + 1} =  r - \sum_i \frac{\mu_i }{\mu_i + 1}
		\end{aligned}
	\end{equation} 
	From the definition in \eqref{eq:G_def}, the constituents of $\Gm$ depend on $\snrdl$ as follows:
	\begin{itemize}
		\item[(a)] We can represent the training matrix as $\Xm^{\rm tr} = \sqrt{\snrdl}\Xm_0^{\rm tr}$, where $\Xm_0^{\rm tr}$ is generated randomly independent from $\snrdl$. Hence,  
		the non-identically zero elements of $\Xm^{\rm tr}$ scale with $\snrdl$ as $\Theta (\sqrt{\snrdl})$ with probability 1.
		\item[(b)] From constraint \eqref{eq:psi_vecs}, we have that each column of $\Psim$ can be written as $\psiv_i = \sqrt{a_i} \phiv_i$, where $a_i = \frac{M\kappa \snrdl }{c_i \snrdl +1}$ with $c_i = \phiv_i^\herm \Xm_0^{{\rm tr}\, \herm }\Sigmam^h \Xm^{\rm tr}_0 \phiv_i$. Here $\phiv_i, \, i\in \{1,\ldots,\beta_{\rm fb}\}$ are a set of unit-norm vectors that contain a subset of $\min(\beta_{\rm tr},\beta_{\rm fb})$ linearly independent vectors, and are independent of $\snrdl$. The existence of this set is guaranteed because $\Xm_0^{{\rm tr}\, \herm }\Sigmam^h \Xm^{\rm tr}_0$ is full-rank with probability one. It follows that
		\begin{equation}
			\begin{aligned}
				\Psim (\Psim^\herm \Psim + \mathbf{I} )^{-1} \Psim^\herm = \Phim (\Phim^\herm \Phim + \Sm )^{-1} \Phim^\herm, 
			\end{aligned}
		\end{equation}
		where $\Phim = [\phiv_1,\ldots,\phiv_{\beta_{\rm fb}}]$ is independent of $\snrdl$ and $\Sm$ is a diagonal matrix whose $(i,i)$ element is given by $S_{i,i} = \frac{c_i}{M\kappa}+ \frac{1}{M\kappa \snrdl}$. The matrix $\Sm$ is the only variable dependent on $\snrdl$, and its diagonal elements scale as $ \Theta (1)$. 
		\item[(c)] The matrices $\Um_h$ and $\Lambdam_h$ are independent of $\snrdl$.
	\end{itemize}
	From these we conclude that the non-zero eigenvalues of $\Gm$ scale as $\Theta (\snrdl )$, i.e. $\mu_i = \Theta (\snrdl)$ for all $\mu_i \neq 0$. Now, the rank of $\Gm$ depends on the specific realization of $\Xm^{\rm tr}$. When $\min(\beta_{\rm tr},\beta_{\rm fb})\ge r$ and $\Xm^{\rm tr}$ contains isotropic Gaussian pilot vectors, $\Gm$ is full-rank with probability one because of the same argument as used in Part I of the proof of Theorem \ref{thm:rate_asymp} and considering the fact that the constituent matrix $\Psim (\Psim^\herm \Psim + \mathbf{I} )^{-1} \Psim^\herm$ is positive semi-definite with rank $\min (\beta_{\rm tr},\beta_{\rm fb})$. In this case we have $\bE [\Vert \hv - \widehat{\hv}\Vert^2 ]=\Theta (\snrdl^{-1})$ and therefore an error of $\Theta (\snrdl^{-1})$ is achievable. Conversely, if $r>\beta_{\rm tr}$ or $r>\beta_{\rm fb}$, $\Gm$ has rank at most $\min (\beta_{\rm tr},\beta_{\rm fb}) < r$ for any design of pilot matrices, leading to $\mu_i= 0$ for some $i$ and from \eqref{eq:tr_bounds}, the error is bounded from below and above by constants, i.e. we have an error of $\Theta (1)$. Therefore analog feedback achieves an error of $\Theta(\snrdl^{-\alpha_{\rm af}}) $ with probability one over the realizations of $\Xm^{\rm tr}$ where $\alpha_{\rm af} = \mathbf{1}\{\min(\beta_{\rm tr},\beta_{\rm fb}) \geq r\}$. This completes the proof.\hfill $\blacksquare$

{\small
	\bibliographystyle{IEEEtran}
	\bibliography{references}
}

%
		
	\end{document}